\newtheorem{prop}{Proposition}[section]
\newtheorem*{hyp}{Hypothesis (H)}
\theoremstyle{remark}
\newtheorem{rmk}[prop]{Remark}
\newcommand{\E}{\mathbb{E}}
\newcommand{\cF}{\mathcal{F}}
\renewcommand{\P}{\mathbb{P}}
\newcommand{\Q}{\mathbb{Q}}
\newcommand{\erre}{\mathbb{R}}
\newcommand{\ds}{\displaystyle}
\title{Nonparametric estimates of pricing functionals}
\author{Carlo Marinelli\thanks{Department of Mathematics, University
    College London, Gower Street, London WC1E 6BT, UK.} \and Stefano
  d'Addona\thanks{Corresponding Author. Department of Political
    Science, University of Rome 3, Via G. Chiabrera, 199, I-00145
    Rome, Italy. Tel.~+39-06-5733-5331; e-mail
    \texttt{daddona@uniroma3.it}}}
\date{June 8, 2017}
\begin{document}
\maketitle

\begin{abstract}
  We analyze the empirical performance of several non-parametric
  estimators of the pricing functional for European options, using
  historical put and call prices on the S\&P500 during the year
  2012. Two main families of estimators are considered, obtained by
  estimating the pricing functional directly, and by estimating the
  (Black-Scholes) implied volatility surface, respectively. In each
  case simple estimators based on linear interpolation are
  constructed, as well as more sophisticated ones based on smoothing
  kernels, \`a la Nadaraya-Watson. The results based on the analysis
  of the empirical pricing errors in an extensive out-of-sample study
  indicate that a simple approach based on the Black-Scholes formula
  coupled with linear interpolation of the volatility surface
  outperforms, both in accuracy and computational speed, all other
  methods.

\medskip

\noindent \textit{Keywords:} non-parametric estimation; option
pricing; implied volatility.

\medskip

\noindent \textit{JEL codes:} G13, C14, C52.
\end{abstract}

\section{Introduction}
\label{sec:intro}
The purpose of this work is to analyze the empirical performance of
some non-parametric and semi-parametric estimators of pricing
functional, with particular emphasis on the simplest contingent
claims, i.e. European put and call options. The well-known idea
underlying non-parametric estimation is the following: assume that the
price of a contingent claim of a certain type can be written as
$\pi(y^1,\ldots,y^m)$, where $\pi:\erre^m \to \erre$ is a function to
be estimated, and $y^1,\ldots,y^m$ are observable parameters. Given a
sample
\[
(y_k)_{1\leq k \leq N} = \bigl(y^1_k,\ldots,y^m_k\bigr)_{1\leq k \leq N},
\qquad (\pi_k)_{1\leq k \leq N} = 
\bigl( \pi(y^1_k,\ldots,y^m_k) \bigr)_{1\leq k \leq N}
\]
and $y_0 \in \erre^m$, one can estimate $\pi(y_0)$ by linear
interpolation (for instance) of the function
$\tilde{\pi}:\erre^m \to \erre$ defined as
\[
\tilde{\pi}(y) := \sum_{k=1}^N \pi_k \mathbf{1}_{\{y_k\}}(y).
\]
This procedure is well understood if $y_0$ belongs to the convex hull
of $(y_i)_{1\leq i \leq N}$ (see \S\ref{ssec:li} for more details). 
There exist of course many other procedures
based on nonlinear interpolation, rather than linear, that may be
preferable in certain situations, for instance to obtain estimators
that are more regular than just continuous. The same problem has also
been studied from a statistical perspective, leading to the large
literature on non-parametric estimation of regression
functions\footnote{The even larger literature on non-parametric density
  estimation treats a less general but very related problem.} (see,
e.g.,~\cite{Bierens} and references therein). One of the most popular
estimators in this literature is the so-called Nadaraya-Watson
estimator (see \cite{Nadaraya, Watson}), that is
\[
\hat{\pi}_\varepsilon(y) = 
\frac{\ds \sum_{k=1}^m \rho_\varepsilon(y-y_k) \pi_k}%
     {\ds \sum_{k=1}^m \rho_\varepsilon(y-y_k)},
\]
where $\rho:\erre^m \to \erre$ is a strictly positive continuous
radial function with integral equal to $1$ and
$\rho_\varepsilon := \varepsilon^{-m}\rho(\cdot \, \varepsilon^{-1})$,
or a slight generalization thereof.

Other functions of the parameters $(y^1,\ldots,y^m)$ can clearly be
estimated in the same manner. In particular, if the parameters are the
usual inputs in the Black-Scholes pricing formula (except the
volatility), one can estimate the implied volatility, which can then be
fed back to the Black-Scholes formula to produce further estimates of
the pricing functional.

One of the first works applying non-parametric regression \`a la
Nadaraya-Watson in pricing problems is \cite{AL}, where the authors,
among other things, report impressive empirical results on the
precision of semi-parametric estimators applied to pricing European
call options on the S\&P500. To date, a large literature deals with
this approach to pricing %\footnote{We do not know, however, whether
% pricing by non-parametric regression has (had) any interest among
% practitioners.}  
-- see, e.g.,~\cite{GHS} and references therein, as
well as \cite{ChernovGhysels:00, RosenbergEngle:02} for related
ideas.

Our purpose is to understand how different (but related)
non-parametric approaches perform in terms of pricing precision, also
in comparison with non-trivial fully parametric alternatives. A
natural question, of clear relevance for practical purposes, is
whether Nadaraya-Watson kernel estimation produces better estimates
than elementary linear interpolation. In order to avoid rather heavy
Markovianity and stationarity assumptions on the price process of the
underlying index, we use observed option prices in a given day to
estimate (unobserved!) option prices in the \emph{same} day. It turns
out that, from the empirical point of view, there does not seem to be
any advantage connected with Nadaraya-Watson kernel estimators, which
perform rather poorly in comparison with simple linear-interpolation
estimators. The latter also consistently outperform a benchmark
parametric estimator based on the (skewed) Variance-Gamma process (see
\cite{MCC}).

The focus of this work is somewhat different from that of \cite{AL},
whose main aim is to estimate the so-called state-price
density. However, such estimates are obtained by the second derivative
with respect to the strike price of (estimates of) the pricing
functional for European call options, and their main (practical)
application seems to be pricing anyway. On the other hand, the main
terms of comparison used in \cite{AL} are rather involved methods
based on neural networks and on implied binomial trees, and empirical
tests are done in a very different way: observed option prices over a
period of nine months are aggregated to construct a Nadaraya-Watson
estimator of the implied volatility surface, seen as a function of
underlying's futures price, strike, and time to maturity. This
estimator is then used to \emph{forecast} prices of European call
options (with strike prices within a certain range) at five later dates,
namely from one to twenty days in the future.

We have not been able to find in the literature neither empirical
studies where observed option prices over different dates are
\emph{not} aggregated, nor a discussion about the practical aspects of
non-parametric estimation of pricing functionals, for the mere purpose
of pricing \emph{plain vanilla} European options.\footnote{One can
  find, however, articles that use non-parametric methods to study
  related problems, without aggregating option prices across days. For
  instance, \cite{JR96} estimate the risk-neutral density and
  \cite{Beare} test the monotonicity of the pricing kernel.} Our goal
is to try and answer some basic fundamental questions in this regard,
on the basis of an extensive out-of-sample analysis.

The rest of the paper is organized as follows: in Section
\ref{sec:prelim}, after recalling some facts about yield processes
associated to dividend-paying assets that are not easily found in
standard textbooks, we prove a put-call parity identity for European
options written on a dividend-paying asset, under a conditional
uncorrelation assumption involving the asset, dividend rate, and
risk-free rate processes. These basic theoretical results are needed
because the S\&P500 pays dividends. The various estimators used in the
empirical analysis to follow are introduced in detail in Section
\ref{sec:estim}. A preliminary analysis on the data set of option
prices on the S\&P500 for the whole year 2012 is conducted in Section
\ref{sec:data}. Finally, Section \ref{sec:ea} contains a detailed
study of the empirical performance of the estimators introduced in
Section \ref{sec:estim}.

%-------------------------------------------------------------------

\section{Setting and preliminaries}
\label{sec:prelim}
Given a probability space $(\Omega,\cF,\P)$ endowed with a
(right-continuous, complete) filtration $\mathbb{F}=(\cF_t)_{0\leq t\leq T}$,
let the adapted processes $\beta$ and
$S:\Omega \times [0,T] \to \erre$ describe the price of two traded
assets: the former is just the risk-free cash account, i.e.
\[
\beta_t = \exp\biggl( \int_0^t r_s\,ds \biggr) \qquad \forall t \in [0,T],
\]
where $r$ denotes the adapted, $\P$-a.s. positive, short-rate process;
the latter is a risky asset with an associated adapted,
$\P$-a.s. positive, dividend rate process $q$, such that
\[
\exp \biggl( \int_0^T q_u\,du \biggr) \in L^\infty(\Omega,\cF_T,\P).
\]
The corresponding yield process $Y$ is defined as
\[
Y_t = S_t + \int_0^t q_sS_s\,ds \qquad \forall t \in [0,T].
\]
We assume that there exists a probability measure $\Q$, equivalent to
$\P$, such that, setting $\check{S}_t := \beta_t^{-1}S_t$, the
discounted yield process $\check{Y}$, defined by
\[
\check{Y}_t := \check{S}_t + \int_0^t q_u \check{S}_u\,du \equiv
\beta_t^{-1}S_t + \int_0^t q_u\beta_u^{-1}S_u\,du
\qquad \forall t \in [0,T],
\]
is a square integrable $\Q$-martingale. Setting
$A:=e^{\int_0^\cdot q_u\,du}$, the integration by parts formula yields
\[
\check{S}_tA_t = S_0 + \int_0^t \check{S}_{u-}\,dA_u 
+ \int_0^t A_{u-}\,d\check{S}_u + [\check{S},A]_t,
\]
where
\[
\int_0^t \check{S}_{u-}\,dA_u = \int_0^t \check{S}_{u-} A_u q_u\,du
= \int_0^t \check{S}_u A_u q_u\,du,
\]
and, by continuity of $A$,
\[
\int_0^t A_{u-}\,d\check{S}_u = \int_0^t A_u\,d\check{S}_u, 
\qquad [\check{S},A]_t=0.
\]
In particular,
\[
S_t \, \exp\biggl(\int_0^t (q_u-r_u)\,du \biggr) = 
\check{S}_tA_t = S_0 + \int_0^t A_u\,d\check{Y}_u
\]
is a (square integrable) $\Q$-martingale , because $A$ is bounded and
predictable.

As is well known, the existence of an equivalent probability measure
$\Q$ such that the discounted yield process $\check{Y}$ is a
$\Q$-martingale implies that the market defined by $\beta$ and $S$ is
free of arbitrage. In general $\Q$ is not unique, unless the market is
complete, and we assume that $\Q$ is just a fixed pricing
measure. More precisely, we assume that we are given a pricing
functional
\begin{align*}
\pi: L^\infty(\Omega,\cF_T,\P) &\to \erre\\
X &\mapsto \E_{\Q} \beta_T^{-1}X,
\end{align*}
or, equivalently,
\[
\pi(X) := \E Z_TX, \qquad Z_T := \frac{1}{\beta_T} \frac{d\Q}{d\P},
\]
where the random variable $Z_T$ often goes under the name of
stochastic discount factor.

We shall write, for compactness of notation,
\begin{gather*}
A(t,T) := \exp\biggl(\int_t^T q_u\,du\biggr),
\qquad \bar{q}(t,T) := \frac{1}{T-t} \log \E_\Q[A(t,T)\vert\cF_t],\\
B(t,T) := \exp\biggl(-\int_t^T r_u\,du\biggr),
\qquad
\bar{r}(t,T) := - \frac{1}{T-t} \log \E[B(t,T)\vert\cF_t],
\end{gather*}
where $0 \leq t \leq T$.

We are going to use several times the following assumptions on $r$,
$q$, and $S$.
\begin{hyp}
  One has, for any $0 < t \leq T$,
  \[
  \operatorname{Cov}_{\Q}\bigl(A(t,T),B(t,T) \big\vert \cF_t\bigr) = 0, 
  \qquad
  \operatorname{Cov}_{\Q}\bigl(A(t,T),B(t,T)S_T \big\vert \cF_t\bigr) = 0.
  \]
\end{hyp}
\noindent Here, for any two random variables $X$, $Y \in L^2(\Omega,\cF_T,\Q)$,
\[
\operatorname{Cov}_\Q\bigl(X,Y \big\vert \cF_t \bigr) :=
\E_\Q\bigl[ \bigl( X - \E_\Q[X\vert\cF_t] \bigr)
\bigl( Y - \E_\Q[Y\vert\cF_t] \bigr)
  \big\vert \cF_t \bigr].
\]
Hypothesis (H) is equivalent to
\begin{align*}
\E_\Q[A(t,T) B(t,T) \vert \cF_t] &= \E_\Q[A(t,T) \vert \cF_t]
\, \E_\Q[B(t,T) \vert \cF_t],\\
\E_\Q[A(t,T) B(t,T) S_T \vert \cF_t] &= \E_\Q[A(t,T) \vert \cF_t]
\, \E_\Q[B(t,T) S_T \vert \cF_t]
\end{align*}
for all $0 < t \leq T$.

\subsection{Put-call parity with dividends and applications}
To implement some estimators of the pricing functional, but also to
carry out a preliminary analysis on the raw option prices, we shall
need a version of put-call parity for European call and put
options. This will also yield a natural estimator for (a functional
of) the dividend rate process $q$.
\begin{prop}   \label{prop:parity}
  Assume that Hypothesis \emph{(H)} holds. Then, for any
  $0 \leq t \leq T$,
  \begin{equation}   \label{eq:parity}
    S_t e^{-\bar{q}(t,T)(T-t)} - K e^{-\bar{r}(t,T)(T-t)} = C_t - P_t,
  \end{equation}
  where $C_t$ and $P_t$ denote the prices at time $t$ of European call
  and put options, respectively, with maturity $T$, strike price $K$,
  and underlying price process $S$.
\end{prop}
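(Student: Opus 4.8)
The plan is to price the call and the put separately as discounted conditional expectations under the pricing measure $\Q$, subtract them, and then reduce the two resulting terms to the forward price of the stock and the present value of the cash flow $K$, using the $\Q$-martingale property established in the preamble together with Hypothesis (H). First I would write the time-$t$ prices as $C_t=\E_\Q[B(t,T)(S_T-K)^+\vert\cF_t]$ and $P_t=\E_\Q[B(t,T)(K-S_T)^+\vert\cF_t]$, which is legitimate since $B(t,T)=\beta_t/\beta_T$ and $\pi$ is the pricing functional associated with $\Q$. Subtracting and using the elementary identity $(S_T-K)^+-(K-S_T)^+=S_T-K$, the kink disappears and I obtain
\[
C_t-P_t=\E_\Q[B(t,T)S_T\vert\cF_t]-K\,\E_\Q[B(t,T)\vert\cF_t].
\]

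The second term is immediate: it is $K$ times the time-$t$ price of the zero-coupon bond, so by the definition of $\bar r(t,T)$ (reading the expectation there as the pricing expectation) it equals $Ke^{-\bar r(t,T)(T-t)}$. The heart of the argument is the first term, $\E_\Q[B(t,T)S_T\vert\cF_t]$, and this is where Hypothesis (H) enters. From the integration-by-parts computation above, $\check S_tA_t=S_t\exp(\int_0^t(q_u-r_u)\,du)$ is a square-integrable $\Q$-martingale; equating its conditional expectation at $T$ with its value at $t$ and cancelling the $\cF_t$-measurable factor $\exp(\int_0^t(q_u-r_u)\,du)$ yields the key identity
\[
\E_\Q[A(t,T)B(t,T)S_T\vert\cF_t]=S_t.
\]
Now the second equation of Hypothesis (H) lets me factor the left-hand side as $\E_\Q[A(t,T)\vert\cF_t]\,\E_\Q[B(t,T)S_T\vert\cF_t]$; since $\E_\Q[A(t,T)\vert\cF_t]=e^{\bar q(t,T)(T-t)}$ by definition, I can solve for the first term to get $\E_\Q[B(t,T)S_T\vert\cF_t]=S_te^{-\bar q(t,T)(T-t)}$. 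Substituting both terms into the displayed difference gives \eqref{eq:parity}.

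The main obstacle, or rather the only non-mechanical step, is recognising that the martingale identity naturally produces the product $A(t,T)B(t,T)S_T$ inside the expectation, and that the conditional decorrelation in Hypothesis (H) is exactly what is needed to split off the dividend factor $A(t,T)$ so that the clean discount $e^{-\bar q(t,T)(T-t)}$ emerges; without (H) one would be left with the raw forward price $\E_\Q[B(t,T)S_T\vert\cF_t]$ and could not express it through $\bar q$. The remaining points are routine bookkeeping: the square-integrability of $\check Y$ and the boundedness of $A$ guarantee that every conditional expectation in sight is finite, and the endpoint $t=0$, where (H) is stated only for $t>0$, follows by the same computation with the trivial $\sigma$-field (or by continuity).
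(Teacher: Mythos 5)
Your proof is correct and follows essentially the same route as the paper's: both rest on the payoff identity $(S_T-K)^+-(K-S_T)^+=S_T-K$, the $\Q$-martingale property of $S\exp\bigl(\int_0^\cdot(q_u-r_u)\,du\bigr)$, and the second factorization in Hypothesis (H) to extract $e^{-\bar q(t,T)(T-t)}$. The only differences are cosmetic (you discount by $B(t,T)$ throughout instead of $B(0,T)$ and cancelling $B(0,t)$ at the end), plus your tidy handling of the $t=0$ endpoint and of the implicit $\E_\Q$ in the definition of $\bar r$, which the paper glosses over.
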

\begin{proof}
  Multiplying the identity 
  \[
  S_T-K = (S_T-K)^+ - (K-S_T)^-
  \]
  by $\beta_T^{-1} = B(0,T) = e^{-\int_0^T r_u\,du}$ and taking
  conditional expectation yields
  \begin{multline*}
    \E_\Q\bigl[\beta_T^{-1}S_T \big\vert \cF_t \bigr] 
    - K\E_\Q\bigl[\beta_T^{-1} \big\vert \cF_t \bigr] \\ 
     = \E_\Q \bigl[\beta_T^{-1} (S_T-K)^+ \big\vert \cF_t \bigr] 
      - \E_\Q \bigl[\beta_T^{-1} (K-S_T)^- \big\vert \cF_t \bigr]
     = B(0,t)\bigl( C_t - P_t \bigr).
  \end{multline*}
  Using Hypothesis (H) and the martingale property of the discounted yield
  process associated to $S$ and $q$, one has
  \[
  \E_\Q\bigl[\beta_T^{-1}S_T \big\vert \cF_t \bigr] \,
  \E_\Q\bigl[A(0,T) \big\vert \cF_t \bigr] = 
  \E_\Q\bigl[\beta_T^{-1} A(0,T) S_T \big\vert \cF_t \bigr]
  = A(0,t) B(0,t) S_t,
  \]
  where
  $\E_\Q\bigl[A(0,T) \big\vert \cF_t \bigr] = A(0,t) \E_\Q\bigl[A(t,T)
  \big\vert \cF_t \bigr]$, hence
  \[
  \E_\Q\bigl[\beta_T^{-1}S_T \big\vert \cF_t \bigr] = S_t \, B(0,t)
  \frac{1}{\E_\Q\bigl[A(t,T) \big\vert \cF_t \bigr]}
  = S_t B(0,t) e^{-\bar{q}(t,T)(T-t)}.
  \]
  Similarly,
  \[
  \E_\Q\bigl[\beta_T^{-1} \big\vert \cF_t \bigr] = 
  B(0,t) \E_\Q\bigl[B(t,T) \big\vert \cF_t \bigr] =
  B(0,t) e^{-\bar{r}(t,T)(T-t)}.
  \]
  Collecting terms and simplifying yields
  \[
  S_t e^{-\bar{q}(t,T)(T-t)} - K e^{-\bar{r}(t,T)(T-t)} = C_t - P_t.
  \qedhere
  \]
\end{proof}

Recall that the forward price at time $t \leq T$ of a contingent claim
$X \in L^2(\Omega,\cF_T,\Q)$ is defined as
\[
f_t^X := \frac{\E_\Q[X\beta_T^{-1} \vert \cF_t]}%
             {\E_\Q[\beta_T^{-1} \vert \cF_t]}
\]
(see, e.g.,~\cite[{\S}2.4]{KS-mmf}).
Using arguments entirely analogous to those in the proof of the above
Proposition, one obtains the following ``spot-forward parity''.
\begin{prop}
  If Hypothesis \emph{(H)} holds, then
  \[
  f_t^{S_T} = S_t \, e^{(\bar{r}(t,T)-\bar{q}(t,T))(T-t)}.
  \]
\end{prop}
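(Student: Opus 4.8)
The plan is to mimic the proof of Proposition~\ref{prop:parity} almost verbatim, since the statement hints that the arguments are "entirely analogous." The goal is to compute the forward price $f_t^{S_T}$ by separately evaluating its numerator $\E_\Q[S_T\beta_T^{-1}\vert\cF_t]$ and denominator $\E_\Q[\beta_T^{-1}\vert\cF_t]$ using Hypothesis~(H), and then take their ratio.

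First I would recall that, exactly as established inside the proof of Proposition~\ref{prop:parity}, the denominator satisfies
\[
\E_\Q\bigl[\beta_T^{-1}\big\vert\cF_t\bigr]
= B(0,t)\,\E_\Q\bigl[B(t,T)\big\vert\cF_t\bigr]
= B(0,t)\,e^{-\bar{r}(t,T)(T-t)},
\]
which uses only the factorization $\beta_T^{-1}=B(0,T)=B(0,t)B(t,T)$ and the fact that $B(0,t)$ is $\cF_t$-measurable.

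Next I would treat the numerator. Using the second covariance identity in Hypothesis~(H) together with the martingale property of the discounted yield process (which gives $\E_\Q[\beta_T^{-1}A(0,T)S_T\vert\cF_t]=A(0,t)B(0,t)S_t$), one obtains, just as in the earlier proof,
\[
\E_\Q\bigl[\beta_T^{-1}S_T\big\vert\cF_t\bigr]
= S_t\,B(0,t)\,e^{-\bar{q}(t,T)(T-t)}.
\]
Here the key manipulation is peeling off the $\cF_t$-measurable factor via $\E_\Q[A(0,T)\vert\cF_t]=A(0,t)\E_\Q[A(t,T)\vert\cF_t]$ and recognizing $\E_\Q[A(t,T)\vert\cF_t]=e^{\bar{q}(t,T)(T-t)}$ from the definition of $\bar{q}$.

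Finally I would form the ratio: the common factor $S_t\,B(0,t)$ in the numerator and $B(0,t)$ in the denominator, and the resulting quotient simplifies to
\[
f_t^{S_T}
= \frac{S_t\,B(0,t)\,e^{-\bar{q}(t,T)(T-t)}}{B(0,t)\,e^{-\bar{r}(t,T)(T-t)}}
= S_t\,e^{(\bar{r}(t,T)-\bar{q}(t,T))(T-t)},
\]
which is the claimed spot-forward parity. I do not expect any genuine obstacle here, since both conditional expectations were already computed in the proof of Proposition~\ref{prop:parity}; the only care needed is to make sure the denominator is nonzero (guaranteed by positivity of $B(0,t)$) and that the second part of Hypothesis~(H) is precisely what is required to factor the numerator.
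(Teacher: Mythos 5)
Your proof is correct and is exactly the argument the paper has in mind: the paper omits the proof, saying only that it follows by ``arguments entirely analogous'' to those for Proposition~\ref{prop:parity}, and you carry out precisely those arguments --- reusing the two conditional expectations $\E_\Q[\beta_T^{-1}S_T\vert\cF_t]=S_tB(0,t)e^{-\bar{q}(t,T)(T-t)}$ (via the second identity in Hypothesis~(H) and the martingale property of the discounted yield) and $\E_\Q[\beta_T^{-1}\vert\cF_t]=B(0,t)e^{-\bar{r}(t,T)(T-t)}$, then taking their ratio in the definition of $f_t^{S_T}$. No gaps; your remark that only the second part of Hypothesis~(H) is actually needed is accurate.
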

In particular, \eqref{eq:parity} could also be written as
\[
f^{S_T}_t e^{-\bar{r}(t,T)(T-t)} - K e^{-\bar{r}(t,T)(T-t)} = C_t -
P_t.
\]

\medskip

The put-call parity identity of Proposition \ref{prop:parity} implies
a procedure to estimate $\bar{q}(t,T)$: assuming that an estimator of
$\bar{r}(t,T)$ is available and is denoted by the same symbol, and
that prices of call and put options with \emph{the same} maturity and
strike price are observable, identity \eqref{eq:parity} yields
\begin{equation}
\label{eq:qest}
\bar{q}(t,T) = -\frac{1}{T-t} 
\log \frac{C(t,T,K) - P(t,T,K) + Ke^{-\bar{r}(t,T)(T-t)}}{S_t},
\end{equation}
where $C(t,T,K)$ and $P(t,T,K)$ denote the price at time $t$ of
European call and put options, respectively, with maturity $T$ and
strike price $K$. This estimate of $\bar{q}(t,T)$ is usually preferred
in applications to option pricing to other proxies of $q$, such as
historical estimates.

%-------------------------------------------------------------------

\section{Estimators of the pricing functional}
\label{sec:estim}
From now on we shall assume that
\begin{itemize}
\item[(a)] the filtration $\mathbb{F}$ is the (right-continuous,
  completed) filtration generated by $(S,r,q)$;
\item[(b)] the process $(S,r,q)$ is Markovian, i.e., for any
  bounded random variable $\xi$ measurable with respect to
  $\sigma\bigl( (S_s,r_s,q_s)_{t \leq s \leq T}\bigr)$, one has
  \[
  \E[\xi \vert \cF_t] = \E[\xi \vert t,S_t,r_t,q_t].
  \]
\end{itemize}
These assumptions immediately imply that there exists a function
$\pi_p:\erre^6 \to \erre$ such that
\[
\pi_p(t,S_t,r_t,q_t,K,T) = 
\beta_t \E_{\Q}\bigl[ \beta^{-1}_T (K-S_T)^+ \big\vert \cF_t
\bigr]
= \E_\Q\Bigl[ e^{-\int_t^T r_u\,du} (K-S_T)^+ \Big\vert \cF_t \Bigr]
\]
for all $t \in [0,T]$. In particular, without any further assumptions,
the pricing functional will be time-dependent, hence it is wrong, in
general, to use the price functional estimated at time $t$ to price
options at time $t+1$, or, similarly, to estimate the pricing
functional aggregating data of different dates.\footnote{For a related
  discussion, also from an economic point of view,
  cf.~\cite{RosenbergEngle:02}.} This procedure become meaningful if
we further assume that the process $(S,r,q)$ is a time-homogeneous
Markov process: in this case $\pi_p$ depends on $t$ and $T$ only
through their difference $T-t$. However, it is far from clear that a
time-homogeneity assumption on the data-generating process would be
appropriate, hence the empirical analysis of Section \ref{sec:ea} is
carried out with $t$ fixed. In \S\ref{ssec:markov} we also provide
empirical data that does not seem to support the plausibility of a
time-homogeneity assumption.

\medskip

We now proceed to describe in detail the various estimators of the
pricing functional for put options that will be tested in Section
\ref{sec:ea}. Analogous considerations, not spelled out in detail,
hold for call options and, in fact, for arbitrary European
derivatives: some care has to be taken only if the payoff function is
unbounded, in which case integrability conditions have to be assumed.

\subsection{Linear interpolator}
\label{ssec:li}
Let $(Y_k)_{1\leq k \leq N}$ and $(p_k)_{1\leq k \leq N}$ subsets of
$\erre^n$ and $\erre$, and $f: \erre^n \to \erre$ a function such that
$f(Y_k)=p_k$ for all $1 \leq k \leq N$. Denoting the convex hull of
$(Y_k)_{1\leq k \leq N}$ by $C$, the linear interpolator
$\hat{f}:C \to \erre$ is a function obtained by linear interpolation
of the function $\langle\mu_N,\mathbf{1}\rangle$, where $\mu_N$ is the
marked empirical measure
\[
\mu_N := \sum_{k=1}^N p_k \delta_{Y_k},
\]
$\delta$ stands for the Dirac measure, and
$\langle\mu_N,\mathbf{1}\rangle := \sum_{k=1}^N p_k
\mathbf{1}_{\{Y_k\}}$.

Linear interpolation here means that the
Delaunay triangulation of $(Y_k)$ is computed, and, for any
$y_0 \in C$, $\hat{f}(y_0)$ is obtained by interpolation of
$\langle\mu_N,\mathbf{1}\rangle$ on the vertices of the simplex
containing $y_0$. We recall that a triangulation of the set of points
$(Y_k)_{1\leq k \leq N}$ is a partitioning of the convex hull $C$ into
simplices whose vertices are the points $(Y_k)_{1\leq k \leq N}$. In
particular, any two simplices of such a partition either do not
intersect or share a common face. A Delaunay triangulation of
$(Y_k)_{1\leq k \leq N}$ satisfies the following additional property:
if $B$ is a ball in $\erre^n$ such that all vertices of a simplex
belong to its boundary, then the interior of $B$ does not contain any
element of $(Y_k)_{1\leq k \leq N}$. Once a Delaunay triangulation of
$(Y_k)_{1\leq k \leq N}$ has been obtained, $\hat{f}(y_0)$, with
$y_0 \in C$, is defined as follows: let $v_1,\ldots,v_d$ the vertices
of the unique simplex $V$ such that $y_0 \in V$, and write
\[
  y_0 = \sum_{j=1}^d \alpha_j v_j, \qquad \alpha_j \geq 0, \qquad \sum
  \alpha_j=1.
\]
Then we set $\hat{f}(y_0) := \sum_j \alpha_j f(v_j)$. For an extensive
treatment of these topics, see, e.g.,~\cite{PreSha}.

\medskip

Let $t>0$ be fixed. Then, as seen above, we can write
\[
\E_\Q\Bigl[ e^{-\int_t^T r_u\,du} (K-S_T)^+ \Big\vert \cF_t \Bigr] 
= \pi_p(K,T-t).
\]
Assuming that $S_t>0$, adaptedness of $S$ and Markovianity
imply
\begin{align*}
  \pi_p(K,T-t) &=
  \E_{\Q} \bigl[ e^{-\int_t^T r_u\,du} (K-S_T)^+ \big\vert \cF_t \bigr]\\
  &= S_t \E_{\Q} \bigl[ e^{-\int_t^T r_u\,du} \bigl(S_t^{-1}K - S_t^{-1}S_T \bigr)^+
  \big\vert \cF_t \bigr]\\
  &= S_t \phi(K,T-t).
\end{align*}
Denoting by $\hat{\phi}$ the linear interpolator of $\phi$, we define
the normalized linear interpolator $\hat{\pi}_p$ of $\pi_p$ by
$\hat{\pi}_p := S_t\hat{\phi}$. This is the estimator used in the
empirical analysis of Section \ref{sec:ea}.

\medskip

Note that the linear interpolator is defined only on the convex hull
$C$ of the couples $(K_k,T_k-t)_{1\leq k\leq N}$. In other words, this
estimator is unable to estimate the price of options whose parameters
$(K,T-t)$ do not belong to $C$. From a purely numerical point of view,
the estimator $\hat{\pi}_p$ could be extended outside $C$, for
instance by extrapolation. However, the estimates obtained this way
are rarely reliable. Another possibility to extend the estimator
outside $C$ is to add fictitious couples $(K_j,T_j-t)_j$ to the sample
where the value of the function $\pi_p$ is known, e.g. for $T_j-t=0$,
or for $K_j$ ``very large''. This idea, which is obviously more
meaningful than a mere numerical extrapolation, may lead, in some
situations, to relatively satisfactory results. Details are discussed
in Section \ref{sec:ea} below.

\subsection{Nadaraya-Watson estimator}
\label{ssec:NW}
Let $(Y_k,p_k)_{1 \leq k \leq N}$ and $f$ be as in the previous
subsection. Let $\rho:\erre \to \erre_+$ be an integrable function
with integral equal to $1$, and set, for any $\varepsilon>0$,
$\rho_\varepsilon(x)=\varepsilon^{-1} \rho(x \varepsilon^{-1})$.  With
a slight abuse of notation, for any
$\varepsilon = (\varepsilon_1,\ldots,\varepsilon_n) \in (0,\infty)^n$,
we define the function $\rho_\varepsilon:\erre^n \to \erre$ as
$\rho_\varepsilon = \rho_{\varepsilon_1} \otimes \cdots \otimes
\rho_{\varepsilon_n}$, i.e.
\[
\rho_\varepsilon(x) \equiv \rho_\varepsilon(x_1,\ldots,x_n)
 = \rho_{\varepsilon_1}(x_1) \, \cdots \, \rho_{\varepsilon_n}(x_n).
\]
The Nadaraya-Watson (NW) estimator $\hat{f}_\varepsilon$ of
$f:\erre^n \to \erre$, with smoothing parameters
$\varepsilon=(\varepsilon_1,\ldots,\varepsilon_n)$, based on the
observations $(Y_k,p_k)_k$, is defined as
\[
\hat{f}_\varepsilon(x) = \frac{\ds \sum_{k=1}^N p_k \rho_\varepsilon(x-Y_k)}%
{\ds \sum_{k=1}^N \rho_\varepsilon(x-Y_k)}.
\]

The value of the function $\hat{f}_\varepsilon$ at $x$ is thus a
weighted average of the observations $(p_k)_k$ of the type
\[
\hat{f}_\varepsilon(x) = \sum_{k=1}^N p_k w_k\bigl(x;(Y_k)\bigr),
\qquad w_k\bigl(x;(Y_k)\bigr) := \frac{\rho_\varepsilon(x-Y_k)}%
{\sum_k \rho_\varepsilon(x-Y_k)}.
\]
Usually $\rho$ is such that $\rho = g \circ \lvert\cdot\rvert$
(i.e. it is symmetric), with $g$ is decreasing, so that the NW
estimator effectively computes a weighted average of the observations
assigning more weight to those closer to $x$. If $\rho$ has compact
support, the average is on a finite number of points whose distance
from $x$ does not exceed a certain threshold (depending on
$\varepsilon$). The NW estimator can also be interpreted as a local
constant least square approximation of the observed values $(p_k)$,
because (assuming for simplicity $n=1$)
\[
\hat{f}_\varepsilon(x) = \arg\,\min_{\theta\in\erre} 
\sum_{k=1}^N (p_k-\theta)^2 \, \rho(\varepsilon^{-1}(x-Y_k))
\]
(see, e.g.,~\cite[p.~34]{Tsybakov}). Here ``local'' simply refers to the
weighting through $\rho$ that, as before, assigns more weights to
observations $Y_k$ closer to $x$.

\medskip

The explicit form of the Nadaraya-Watson estimator of $\pi_p$ we shall use
is
\[
\hat{\pi}_\varepsilon(K,T) = 
\frac{\ds \sum_{k=1}^N p_k \rho_{\varepsilon_1}(K-K_k) \rho_{\varepsilon_2}(T-T_k)}%
     {\ds \sum_{k=1}^N \rho_{\varepsilon_1}(K-K_k) \rho_{\varepsilon_2}(T-T_k)},
\]
where $\rho$ is the density of the standard Gaussian measure on $\erre$.

\medskip

As is well known (see, e.g.,~\cite{Tsybakov}), the choice of the
smoothing parameter $\varepsilon$ is crucial in the implementation of
non-parametric regression techniques: as $\varepsilon \to 0$ the
estimator is ``undersmoothing'', and as $\varepsilon \to \infty$ it is
``oversmoothing''.\footnote{Roughly speaking, as $\varepsilon \to 0$
  the estimator $\hat{f}_\varepsilon(x)$ reproduces the data, i.e. it
  is equal to $p_k$ for $x=Y_k$ and zero elsewhere, and it converges
  to a constant equal to the average of the $p_k$ as $\varepsilon \to
  \infty$.} We are going to select $\varepsilon$ by (leave-one-out)
cross-validation (cf.~\cite[{\S}1.4, p.~27-ff.]{Tsybakov}) as
follows: let $\hat{f}_{\varepsilon,-j}$ be the Nadaraya-Watson
estimator of $f$ based on the sample $(Y_k,p_k)_{k \in
  I\setminus\{j\}}$, $I:=\{1,\ldots,N\}$, and
\[
CV(\varepsilon) := 
\sum_{j \in I} \big\lvert f_j - \hat{f}_{\varepsilon,-j}(X_j) \big\rvert^2.
\]
Then we set
\[
\varepsilon_{CV} := \arg\min_{\varepsilon>0} CV(\varepsilon).
\]
In the NW estimator of the pricing functional we replace
$CV(\varepsilon)$ above with
\[
CV(\varepsilon) := 
\sum_{j \in I} \big\lvert 1 - \hat{f}_{\varepsilon,-j}(X_j)/f_j \big\rvert^2,
\]
i.e. the smoothing parameter is chosen as to minimize the relative
error rather than the absolute error.

Since the cross-validation procedure is computationally very
intensive, hence rather slow, we shall also consider a much simpler
choice of the smoothing parameters as a term of comparison. In
particular, following~\cite[\S{3.4.2}]{Silverman}, we shall use
\[
\varepsilon_j := 0.9 \, \min\bigl(D_j,Q_j/1.34\bigr) \, N^{-1/5}, \qquad j=1,2,
\]
where $D_1$ and $Q_1$ are the (sample) standard deviation and the
inter-quartile range of $K_k$, $k=1,\ldots,N$, respectively, and
$D_2$, $Q_2$ are defined in the same way with $T_k-t$ in place of $K_k$.

\begin{rmk}
  (a) In some applications (e.g. to estimate a function together with
  its derivatives) it is useful to allow $\rho$ to take negative
  values. Then $\hat{\pi}_\varepsilon$ is not guaranteed to be
  positive. The usual convention is then simply to redefine
  $\hat{\pi}_\varepsilon$ as its positive part. (b) If $\rho$ is
  supported on the whole space, the Nadaraya-Watson estimator is
  defined also for points that do not lie in the convex hull $C$ of
  $(Y_k)$. However, since this estimator is nothing else than a local
  average, estimates produced for points that lie outside $C$ should
  be taken with extreme care, if not plainly discarded.
\end{rmk}

\subsection{Implied volatility estimators}
\label{ssec:iv}
Let $\mathsf{BS}(S_0,r,q,K,x,\sigma)$ denote the Black-Scholes price of
a put option with time to maturity $x$ and strike $K$, written on an
underlying with current price $S_0$, constant dividend rate $q$ and
volatility $\sigma$, where the risk-free rate $r$ is also constant. As
is well known, $\sigma \mapsto \mathsf{BS}$ is strictly monotone,
hence, for any $S_0,x>0$, $K,r,q \geq 0$ and $0<p<K$, there exists a
unique $\sigma_0>0$, called \emph{implied volatility}, such that
$p=\mathsf{BS}(S_0,r,q,x,K,\sigma_0)$. We shall call by the same name
also the function $(p,S_0,r,q,x,K) \mapsto \sigma_0$ that is uniquely
defined by the procedure just described.

If $(p_k,K_k,x_k)$, $k=1,\ldots,N$, is a sample of observed option
prices and corresponding strike prices and times to maturity in a
fixed day (so that $S_0$ is also fixed), then for each $k$ there exists
a unique positive number $\hat{\sigma}_k$ such that
\[
p_k = \mathsf{BS}(S_0,r,q,K_k,x_k,\hat{\sigma}_k),
\]
hence $(\hat{\sigma}_k)_k$ can be interpreted as (estimates of the)
values of the implied volatility function on the set of points
$(Y_k)_k$, i.e., for some function $\sigma:\erre^2 \to \erre_+$,
$\hat{\sigma}_k = \sigma(Y_k)$ for all $k=1,\ldots,N$. This
yields
\[
p_k = \mathsf{BS}\bigl(K_k,x_k,\hat{\sigma}(K_k,x_k)\bigr) 
\qquad \forall k=1,\ldots,N,
\]
which immediately suggest another procedure to estimate the pricing
functional $\pi_p$: let $y_0 \in C$, where $C \subset \erre^2$ denotes
the convex hull of $(K_k,x_k)_{1\leq k\leq N}$, and define
$\hat{\sigma}(y_0)$ by linear interpolation of $\sigma$ (in the sense
of {\S}\ref{ssec:li}), hence set
\[
\hat{\pi}_p(y_0):=\mathsf{BS}\bigl(y_0,\hat{\sigma}(y_0)\bigr).
\]
In the empirical study below we shall employ a normalized estimator of
the implied volatility function $\sigma:C \to \erre$, in complete
analogy to the construction of the normalized linear interpolator of
\S\ref{ssec:li}. Namely, given $y_0 = (K_0,x_0) \in C$, we define
$\hat{\sigma}(y_0)$ by linear interpolation at $(K_0/S_0,x_0)$ of the
function whose value at $(K_k/S_0,x_k)$ is $\hat{\sigma}_k$,
$k=1,\ldots,N$.

Alternatively, one could use, in place of the linear interpolator
$\hat{\sigma}$, a Nadaraya-Watson estimator
$\hat{\sigma}_\varepsilon$, obtaining another estimator of the pricing
functional. Of course all considerations of the previous subsection
regarding the choice of the smoothing parameter, as well as the lack
of plausibility for estimates with $y_0 \not\in C$, apply also in this
case.

Note that, in contrast to the linear interpolator and the
Nadaraya-Watson estimator, estimating $\sigma_k$ requires estimates of
$r$ and $q$. While historical data on the risk-free interest rate
are readily available, we use as a proxy for $q$ the implied estimator
discussed at the end of Section \ref{sec:prelim} and, in more detail,
in Section \ref{sec:data} below. Moreover, the Black-Scholes structure
allows to obtain an explicit expression for the sensitivity of the
implied volatility with respect to the parameter $q$. Such result,
which could have some interest in its own right, can be found in the
Appendix (the derivation therein deals with call options, but the
corresponding result for put options follows easily).

\subsection{A parametric estimator of the Variance-Gamma class}
\label{ssec:VG}
A measure $\mu$ on the Borel $\sigma$-algebra of $\erre_+$ is called
Gamma measure with parameters $c>0$ and $\alpha>0$ if
\[
\mu(B) = \frac{\alpha^c}{\Gamma(c)} \int_B x^{c-1} e^{-\alpha x}\,dx
\]
for any Borel set $B$. A random variable $G$ is said to be
Gamma-distributed with parameters $c$ and $\alpha$ if its law is a
Gamma measure with the same parameters. Elementary calculus (and the
definition of the Gamma function) shows that, for any $w<\alpha$,
\begin{equation}
  \label{eq:mgf}
  \int_{\erre_+} e^{wx} \mu(dx) = (1 - w/\alpha)^{-c},  
\end{equation}
and
\[
\int_{\erre_+} e^{iux} \mu(dx) = (1 - iu/\alpha)^{-c},
\]
whence it immediately follows that Gamma laws are infinitely
divisible. Let $\mu_1$ be a Gamma measure with $c=1$. Then there
exists a positive increasing L\'evy process starting from zero (i.e.,
a subordinator) $\Gamma$ such that the law of $\Gamma_1$ is $\mu_1$,
and
\[
\E e^{iu\Gamma_t} = (1 - iu/\alpha)^{-t},
\]
hence the law of $\Gamma_t$ is Gamma with parameters $t$ and
$\alpha$. We refer to, e.g.,~\cite{Sato} for details.

Let $W$ a standard Wiener process independent of $\Gamma$, and
consider the process $X$ defined by
\[
X_t = \theta \Gamma_t + \sigma W_{\Gamma_t},
\]
where $\theta$ and $\sigma>0$ are constants. Since
$t \mapsto \theta t+ W_t$ is a L\'evy process and $X$ is obtained by
subordination of the former process, $X$ is itself a L\'evy process,
which goes under the name of (asymmetric) Variance-Gamma process and
was introduced in \cite{MCC}.

In order to construct a pricing functional, we are going to assume
that
\[
\theta + \frac{\sigma^2}{2} < \alpha.
\]
This condition guarantees that there exists a constant $\eta$ such
that the process $\exp(X_t + \eta t)$ is a
$\Q$-martingale.  In fact, since $W_{\Gamma_t}$ is equal in law to
$\Gamma_t^{1/2}W_1$, one has, recalling the expression for the
moment-generating function of Gaussian laws,
\begin{align*}
\E_\Q \exp\bigl(X_t\bigr) &= \E_\Q \exp\bigl(
\theta \Gamma_t + \sigma \Gamma_t^{1/2}W_1 \bigr)\\
&= \E_\Q \E_\Q\bigl[\exp\bigl(
\theta \Gamma_t + \sigma W_{\Gamma_t} \bigr) \big\vert \Gamma_t \bigr]\\
&= \E_\Q \exp\bigl( \theta\Gamma_t + \sigma^2/2 \Gamma_t \bigr)
= \E_\Q \exp\bigl( (\theta+ \sigma^2/2) \Gamma_t \bigr).
\end{align*}
Therefore, if $\theta+\sigma^2/2 < \alpha$, \eqref{eq:mgf} implies
\[
\E_\Q \exp\bigl(X_t\bigr) = \left(
1 - \frac{\theta+\sigma^2/2}{\alpha} \right)^{-t},
\]
hence $\E_\Q \exp(X_t+\eta t) = 1$ choosing
\[
\eta = \log \left( 1 - \frac{\theta+\sigma^2/2}{\alpha} \right).
\]
Since $X$ is a L\'evy process, it is now easy to conclude that the
process $\exp(X_t+\eta t)$ is a $\Q$-martingale.

We postulate that the risk-free interest rate $r$ and the dividend rate
$q$ are constant and that the price process $S$ can be written as
\[
S_t = S_0 \exp\bigl( (r-q)t + X_t + \eta t \bigr),
\]
so that the market satisfies the no-arbitrage condition. 
\begin{rmk}
  The hypothesis on $S$ just made is \emph{not} equivalent to assuming
  that
  \[
  S_t = S_0 + \int_0^t (r-q)S_s\,ds + \int_0^t S_{s-}\,dX_s.
  \]
  In fact, setting $\tilde{S}_t := e^{-(r-q)t}S_t$, $t \geq 0$, the
  integration-by-parts formula yields
  \[
  \tilde{S}_t = S_0 + \int_0^t \tilde{S}_{s-}\,dX_s,
  \]
  hence $\tilde{S}/S_0$ is given by the Dol\'eans stochastic
  exponential of $X$ (see, e.g.,~\cite[Theorem~26.8]{kall}), which in
  this case, recalling that $X$ is a pure-jump process with finite
  variation, reduces to
  \[
  \tilde{S}_t = S_0 \prod_{s\in]0,t]} \bigl(1+\Delta X_s\bigr),
  \]
  i.e.
  $S_t = S_0 e^{(r-q)t} \prod_{s\in]0,t]} \bigl(1+\Delta X_s\bigr)$.
  Since the Variance-Gamma process $X$ has paths of finite variation,
  the same conclusion can of course be obtained by elementary
  path-wise considerations, without any recourse to stochastic
  calculus.
\end{rmk}

The price at time zero of a European call option expiring at time $T$
is
\begin{align*}
  \pi_T &:= e^{-rT} \E_\Q\Bigl( S_0 \exp \bigl(
        (r-q)T + \eta T + X_T \bigr) - K \Bigr)^+\\
        &= e^{-rT} \E_\Q\Bigl( S_0 \exp \bigl(
        (r-q)T + \eta T + \theta\Gamma_T + \sigma \Gamma_T^{1/2}W_1 \bigr) 
        - K \Bigr)^+\\
        &= e^{-rT} \E_\Q\E_\Q\Bigr[\Bigl( S_0 \exp \bigl(
        (r-q)T + \eta T + \theta\Gamma_T + \sigma \Gamma_T^{1/2}W_1 \bigr) 
        - K \Bigr)^+ \Big\vert \Gamma_T \Bigr].
\end{align*}
Setting
\[
\tilde{q} \equiv \tilde{q}(\Gamma_T) := - (r-q+\eta)T 
- \bigl( \theta + \sigma^2/2 \bigr) \Gamma_T,
\qquad
\tilde{\sigma} \equiv \tilde{\sigma}(\Gamma_T) := \sigma \Gamma_T^{1/2},
\]
it is immediately seen that
\begin{align*}
  \pi_T &= e^{-rT} \E_\Q\E_\Q\Bigr[\Bigl( S_0 \exp \bigl(
        - \tilde{q} + \tilde{\sigma} W_1 - \tilde{\sigma}^2/2 \bigr) 
        - K \Bigr)^+ \Big\vert \Gamma_T \Bigr]\\
        &= e^{-rT} \E_\Q \mathsf{BS}\bigl(S_0,K,0,\tilde{q},\tilde{\sigma},1
                        \bigr)\\
        &= e^{-rT} \E_\Q \Bigl( S_0 \exp\bigl( -\tilde{q} \bigr)
           \Phi(\tilde{d}_+) - K \Phi(\tilde{d}_-) \Bigr)\\
        &= e^{-rT} \E_\Q \Bigl( S_0 \exp\bigl( 
                  (r - q + \eta)T + (\theta+\sigma^2/2) \Gamma_T \bigr)
           \Phi(\tilde{d}_+) - K \Phi(\tilde{d}_-) \Bigr)\\
        &= S_0 e^{(-q+\eta)T} \E_\Q\Bigl( 
           e^{(\theta+\sigma^2/2) \Gamma_T} \Phi(\tilde{d}_+) \Bigr)
           -K e^{-rT} \E_\Q \Phi(\tilde{d}_-),
\end{align*}
where
\begin{align*}
\tilde{d}_+ &= \frac{\log S_0/K + (r-q+\eta)T + (\theta+\sigma^2)\Gamma_T}%
                   {\sigma\Gamma_T^{1/2}},\\
\tilde{d}_- &= \frac{\log S_0/K + (r-q+\eta)T + \theta \Gamma_T}%
                   {\sigma\Gamma_T^{1/2}}.
\end{align*}
A completely similar argument shows that the price of a put option
with the same features can be written as
\[
K e^{-rT} \E_\Q \Phi(-\tilde{d}_-) - S_0 e^{(-q+\eta)T}
\E_\Q\Bigl( e^{(\theta+\sigma^2/2) \Gamma_T} \Phi(-\tilde{d}_+) \Bigr).
\]
The problem of option prices is thus reduced to the evaluation of
integrals against a Gamma measure. This can either be accomplished by
numerical integration (the Gamma measure has an explicit density and
the integrands, as functions of $\Gamma_T$, are ``almost'' explicit),
or, alternatively, by a simulation method. In particular, denoting a
sequence of independent copies of $\Gamma_T$ by $(G_k)$, the strong
law of large numbers yields
\[
\frac1n \sum_{k=1}^n F(G_k) \to \E_\Q F(\Gamma_T)
\]
$\Q$-almost surely for any (measurable) function $F:\erre \to \erre$
such that $\E_\Q\lvert F(\Gamma_T) \rvert < \infty$. Moreover, if
$F(\Gamma_T) \in L^2(\Q)$, the central limit theorem implies
\[
\frac{1}{\sqrt{n}} \sum_{k=1}^n \bigl( F(G_n) - \E_\Q F(\Gamma_T) \bigr)
\xrightarrow{d} N(0,\varsigma^2),
\]
where $\varsigma := \operatorname{Var} F(\Gamma_T)$. Writing
$\pi_T = F(\Gamma_T)$, for a properly chosen $F$, it is immediately
seen that $\pi_T \in L^1(\Q)$ thanks to the hypothesis on the
parameters $(\theta,\sigma,\alpha)$, and $\pi_T \in L^2(\Q)$ if
$2\theta+\sigma^2 < \alpha$. Since typically $\theta<0$ (negative
skew) and $\sigma$ rarely exceeds $1/2$, the condition $\alpha>1/4$ is
not restrictive (the estimated values of $\alpha$ in our data set are
always larger than $1$). For several representative choices of
parameters, an average over $n=10,000$ (pseudo)random variates produces
estimates of $\pi_T$ that are in very good agreement with those
obtained by numerical integration.

\medskip

Once a pricing formula for European options is available, calibration
of the parameters is rather straightforward. Namely, denoting by
$\pi_{VG}(K,T-t;\theta,\sigma,\alpha)$ the (theoretical) price of a
put option in the above VG model (with $t$ fixed), assuming that
$(K_k,T_k-t)$ and $p_k$, $k=1,\ldots,N$, are observed parameters
corresponding and option prices in a fixed day, one sets
\[
(\hat{\theta},\hat{\sigma},\hat{\alpha}) := 
\arg\min_{(\theta,\sigma,\alpha) \in D} \sum_{k=1}^N \left\vert 
\frac{\pi_{VG}(K_k,T_k-t;\theta,\sigma,\alpha)-p_k}{p_k} \right\vert^2,
\]
where
$D := \bigl\{ (\theta,\sigma,\alpha) \in \erre \times (0,\infty)^2 :
\theta + \sigma^2/2 < \alpha \bigr\}$.
This procedure, possibly with the sum of absolute errors instead of
relative errors, is widely used by practitioners as well as in
academic publications (cf.~e.g.~\cite{CarrWu:03} and \cite{CGMY},
respectively).

\begin{rmk}
  It should be pointed out, however, that the map
  $(\theta,\sigma,\alpha) \mapsto \pi_{VG}$ is \emph{not} injective,
  hence the calibration procedure just outlined is not well posed,
  i.e. the (daily) estimates $\hat{\theta}$, $\hat{\sigma}$,
  $\hat{\alpha}$ are not unique. In practice they will depend on the
  initialization of the minimization algorithm (we have chosen
  $\theta_0=0$, $\sigma_0=0.3$ and $\alpha_0 = 2$, respectively).
\end{rmk}

%-------------------------------------------------------------------

\section{Data}
\label{sec:data}
We use S\&P500 index option data\footnote{The raw data are obtained
  from \emph{Historical Option Data}, see
  \texttt{www.historicaloptiondata.com}.} for the period January 3,
2012 to December 31, 2012. The sample contains $77\,408$ observations of
European call and put options. Prices are averages of bid and ask
prices.  Data points with time-to-maturity less than one day or volume
less than $100$ are eliminated. This reduces the size of the sample to
$75\,022$: 61\% and 39\% of the options are of put and call type,
respectively.

During 2012 the annualized mean and standard deviation of daily
returns of the S\&P500 index were equal to $11.09\%$ and $12.64\%$,
respectively. During the same period the 1-year T-bill rate was very
close to zero, with minimal variations: in particular, its mean was
equal to $0.16\%$, with a standard deviation equal to $0.023\%$.

As is well known, index options on the S\&P500 are very actively
traded: the average daily volume is $618\,490$ contracts, with
maturities ranging from $1$ day to almost $3$ years. Descriptive
statistics of the data are collected in Table
\ref{table1}.

\begin{table}[tbp]
\caption{Summary statistics for S\&P500 index options data}
\begin{center}
\vspace{2mm}
\parbox{\textwidth}{\footnotesize This table collects some simple
  statistics for prices of European call and put options on the
  S\&P500 index. The sample period is January 3, 2012 to December 31,
  2012. Implied volatilities are annualized, time to maturity is
  expressed in days, strike and futures prices are expressed in index
  points.  \vspace{4mm}}
\footnotesize{{\begin{center} 
 \begin{tabular*}{\textwidth}{@{\extracolsep{\fill}}lccccccccc} 
 \toprule 
 \multicolumn{4}{c}{} & \multicolumn{5}{c}{Percentiles} & \multicolumn{1}{c}{}\\ 
 \cline{5-9}\\ 
 \multicolumn{4}{c}{}\\ 
Variable & Mean & Std& Min &$5\%$& $10\%$& $50\%$ & $90\%$ & $95\%$& Max  \\ 
 \midrule 
Call price             & 34.3& 98.8& 0.0& 0.1& 0.2& 9.2& 75.2& 115.5& 1270.0\\ 
Put price              & 21.3& 46.5& 0.0& 0.1& 0.1& 5.9& 58.2& 93.7& 1197.0\\ 
Implied vol.       & 0.2& 0.1& 0.0& 0.1& 0.1& 0.2& 0.4& 0.4& 2.6\\ 
Implied ATM vol.    & 0.2& 0.1& 0.0& 0.1& 0.1& 0.2& 0.4& 0.4& 2.0\\ 
Time to maturity & 96.7& 157.0& 1.0& 2.0& 4.0& 38.0& 269.0& 404.0& 1088.0\\ 
Strike price       & 1301.0& 208.4& 100.0& 950.0& 1075.0& 1345.0& 1480.0& 1525.0& 3000.0\\ 
Futures price       & 1374.4& 48.4& 1207.2& 1289.5& 1309.1& 1377.1& 1435.9& 1450.7& 1466.8\\ 
     \hline 
\bottomrule 
\end{tabular*} 
\end{center}}}
\end{center}
\label{table1}
\end{table}

It is commonly accepted that prices of in-the-money (ITM) options,
because of their small trading volume, are not reliable, and that, for
this reason, they should be replaced by prices computed via put-call
parity whenever possible: the ``new'' prices are determined by prices
of out-of-the-money (OTM) options, that are generally traded in larger
volumes, and are hence considered to be accurately priced (cf.,
e.g.,~\cite[p.~517-ff.]{AL} and \cite{CarrWu:03}). We need therefore
to check whether our data set is affected by such phenomenon. In other
words, we need to check whether the recorded prices for ITM options
satisfy the put-call parity relation with the corresponding OTM
options. We are going to show that prices of ITM options in our data
set can be considered perfectly reliable, and hence that no correction
is needed. It is natural to argue that discarding prices of options
whose trading volume is lower than 100 already eliminates possibly
unreliable quotes.

Basic summary statistics on the volume of the options in our database
according to their moneyness (see below for the precise definition we
adopt) are collected in Table \ref{table1bis}. Note that the total
number of traded ITM and OTM options are rather close.

\begin{table}[tbp]
\caption{Statistics on the daily volume for S\&P500 index options}
\begin{center}
\vspace{2mm}
\parbox{\textwidth}{\footnotesize This table collects basic
  descriptive statistical figures on the daily trading volume of
  European call and put options on the S\&P500 index. The sample
  period is January 3, 2012 to December 31, 2012. Moneyness is defined
  in terms of the spot price falling within a $10\%$ interval centered
  around the strike price.  \vspace{4mm}}
\footnotesize{{\begin{center} 
 \begin{tabular*}{\textwidth}{@{\extracolsep{\fill}}lccccccccc} 
 \toprule 
 \multicolumn{4}{c}{} & \multicolumn{5}{c}{Percentiles} & \multicolumn{1}{c}{}\\ 
 \cline{5-9}\\ 
 \multicolumn{4}{c}{}\\ 
Moneyness & Mean & Std& Min &$5\%$& $10\%$& $50\%$ & $90\%$ & $95\%$& Max  \\ 
 \midrule 
At-the-money & 2696& 4793& 100& 110& 150& 825& 7824& 12475& 62334\\ 
In-the-money & 1489& 3972& 100& 100& 113& 490& 3000& 6000& 65412\\ 
Out-of-the-money& 1688& 3191& 100& 108& 133& 640& 4037& 6381& 101718\\ 
     \hline 
\bottomrule 
\end{tabular*} 
\end{center}}}
\end{center}
\label{table1bis}
\end{table}

Let us describe in detail the procedure to obtain ``better'' prices of
ITM options via prices of corresponding OTM options: let $t<T$, $S_t$
and $C_t$ be given, where $S_t$ and $C_t$ denote the observed prices
at time $t$ of the index and of an ITM call option\footnote{The
  reasoning obviously holds, \emph{mutatis mutandis}, also for an ITM
  put option.} with maturity $T$ and strike price $K$,
respectively. Since the put option with \emph{the same} maturity and
strike is necessarily OTM, if Hypothesis (H) is satisfied (or simply
assumed to hold), the put-call parity identity \eqref{eq:parity}
provides an ``alternative'' price for the ITM call option, provided
that
\begin{itemize}
\item[(a)] a put option with \emph{the same} maturity and strike is
  traded;
\item[(b)] estimates of $\bar{r}(t,T)$ and $\bar{q}(t,T)$ are
  available.
\end{itemize}
While good estimates of $\bar{r}(t,T)$ are easily available,
estimating $\bar{q}(t,T)$ is in general not straightforward. In
particular, since we are going to use the estimator defined in
\eqref{eq:qest}, which relies on put-call parity, one needs to avoid
circular reasoning. This is achieved by using pairs of at-the-money
(ATM) put and call options with \emph{the same} maturity and strike to
estimate $\bar{q}(t,T)$. The latter estimates are then used in the
put-call parity formula for ITM/OTM options, so that no circularity is
involved, as, for any given day, the sets of ATM, ITM and OTM options
are disjoint. ATM options are in general very liquid, so that their
observed prices can be considered accurate. This implies that the
corresponding estimates of $\bar{q}$ can also be considered
accurate. On the other hand, it may happen that for an ITM option with
maturity $T$ there is no available estimate of $\bar{q}(t,T)$, simply
because no couple of ATM options with that maturity is traded. In this
case we use linear interpolation, if possible, and nearest-neighbor
extrapolation otherwise.

To implement the procedure just outlined, it is clearly necessary to
define a measure of moneyness, so that options can be (uniquely)
classified as at-the-money, in-the-money, or out-of-the-money. The
simplest definition of (logarithmic spot) moneyness at time $t$ for a
European call or put option with maturity $T$ and strike $K$ is
$\log K/S_t$.  Closely related is the logarithmic forward simple
moneyness, defined as $\log K/f^{S_T}_t$, that is clearly better suited
especially for options with longer maturities. However, recalling that
the forward price $f^{S_T}_t$ depends on $\bar{q}(t,T)$, the risk of
falling into a circular reasoning appears again. To avoid this problem
we simply use as moneyness
\[
M(t,T,K) = \log (K/f_t), \qquad f_t := S_t e^{(r_t-q_t)(T-t)},
\]
where $r_t$ is the spot rate at time $t$ and $q_t$ is the (historical
estimate\footnote{Such estimates of $q_t$, in the case of the S\&P500,
  are readily available.} of the) dividend rate at time $t$.  Then we
say that an option is \emph{at-the-money} (ATM) if its moneyness lies
in the interval $[\log 0.95,\log 1.05]$, i.e. if its forward price at
time $t$ falls within a $10\%$ interval centered around its strike
price.  It is perhaps worth noting that, according to
\cite{CarrWu:03}, the industry-standard definition of moneyness is the
so-called standardized forward moneyness, defined as
\[
\frac{\log(K/f^{S_T}_t)}{\sigma_i \sqrt{T-t}},
\]
where $\sigma_i$ is the (Black-Scholes) implied volatility of the
option.\footnote{In this formula one could replace $f^{S_T}_t$ with
  $f_t$ as defined above, however, the further problem of having to
  estimate the implied volatility appear.}  At any rate, the empirical
results discussed below are essentially insensitive to the definition
of moneyness. 

\medskip

After having spelled out in detail all steps in the implementation of
put-call parity to deduce prices of ITM options from those of
corresponding OTM options, we can now substantiate our claim that ITM
prices can already be considered reliable and no correction is
needed. In fact, denoting by $p$ the market price, by $\tilde{p}$ the
``parity'' price, and defining the relative error by
$\lvert(\tilde{p}-p)/p\rvert$, the mean relative error is $0.3\%$ with
a standard deviation of $0.4\%$. Moreover, the relative error is less
than $1\%$ in $97.5\%$ of cases, with a maximum value equal to
$4\%$. It seems therefore perfectly fine to accept the quoted prices
of ITM option as reliable.

\begin{rmk}
  The procedure described above to ``correct'' ITM option prices is
  used in \cite{AL} (up to details regarding the definition of
  moneyness, which is not explicitly stated therein). The options in
  their data set, being about 20 years older than ours, had much
  smaller trading volumes, at least on average. This might explain why
  they did need to replace prices of ITM options in their database via
  the above correction procedure.
\end{rmk}

\subsection{Option prices across different dates}
\label{ssec:markov}
We argued in Section \ref{sec:estim} that it does not seem reasonable
to impose (Markov) time-homogeneity assumptions on the data-generating
process. To substantiate this claim, we identified two different dates
in our data set (February 2, 2012 and June 21, 2012) when the prices
of the S\&P500 were practically identical ($\$1325.54$ and
$\$1325.51$, respectively), and we looked for pairs of options, traded
on both dates, having the \emph{same} strike price and time to
maturity. Since the risk-free rate as well as the dividend rate were
very close to constant all over the year, a time-homogeneity
assumption could be taken into consideration if the above-mentioned
pairs of options with identical characteristics had prices very close
to each other. Unfortunately this is clearly not the case: in Figure
\ref{fig1} prices of pairs of options, all having time to maturity
equal to $8$ days, are plotted as function of their common strike
price. From a practical viewpoint, the marked price difference could
be explained, at least in part, by different levels of volatility on
the two trading days. In fact, the CBOE Volatility Index for the two
dates was equal to 17.98 and 20.08, respectively.
\begin{figure}[tbp]
\caption{Non-stationarity of the price process}
\label{fig1}\vspace{2mm}
\parbox{\textwidth}{\footnotesize This figure shows the difference in
  prices of options with same time to maturity and strike price,
  comparing two days (February 2 and June 21, 2012) when the prices of
  the underlying coincide.}
\newline
\vspace{2mm}
\par
\begin{center}
\includegraphics[width=0.85\textwidth]{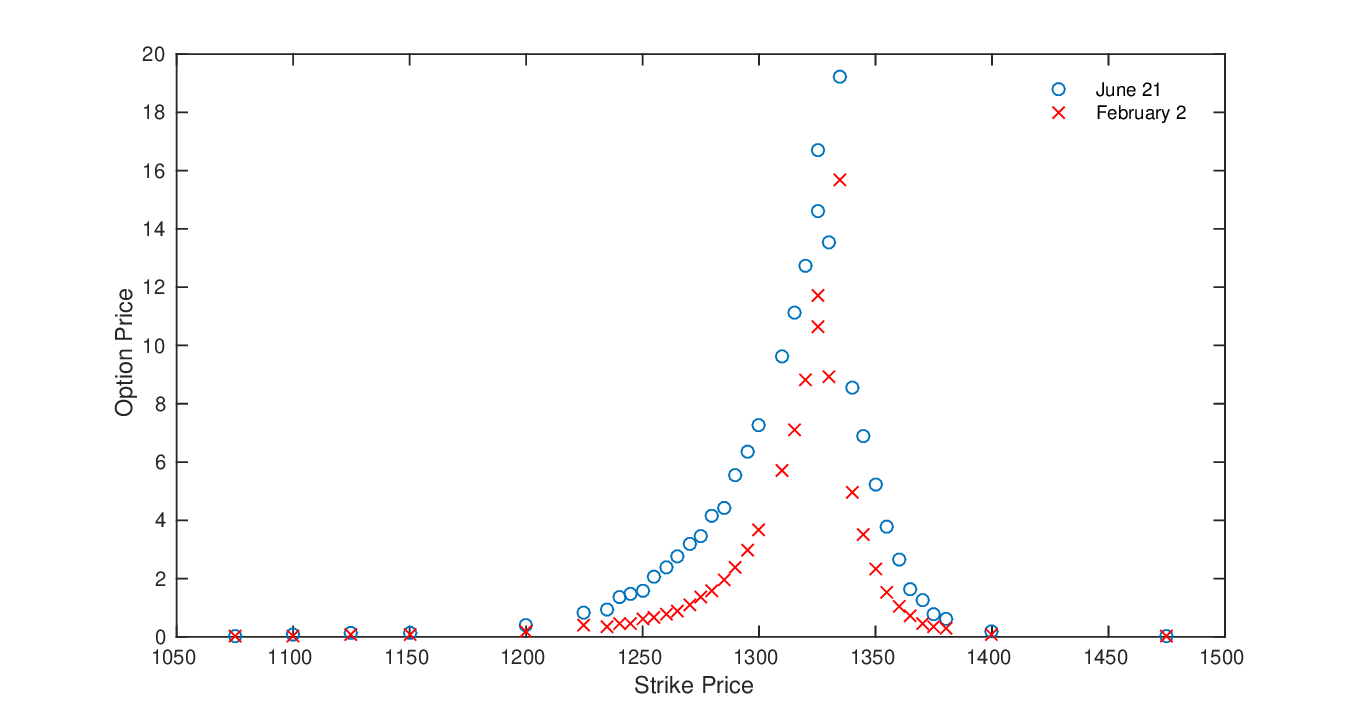}
\end{center}
\end{figure}

%-------------------------------------------------------------------

\section{Empirical analysis}
\label{sec:ea}
In this section we compare the empirical performance of the estimators
introduced above. The Variance-Gamma estimator, being a simple yet
non-trivial parametric approach, can be considered as a
benchmark.

The analysis is carried out on the full data set as well as on a
trimmed data set, where options with price lower than 1/8 or implied
volatility higher than 0.7 are discarded a priori (the latter
preliminary screening follows the approach of \cite{AL}). Moreover, we
consider both put and call options, that is, in one case we discard a
priori all call prices, and in another case all put prices. We have
hence four different data sets on which the (daily) performance of the
estimators is tested.

We are going to perform out-of-sample tests as follows: let
$(p_k)_{k\in J_t}$ be observed put option prices in a fixed trading
day $t$, with $J_t=J_t^0 \cup J_t^1$,
$J_t^0 \cap J_t^1 = \varnothing$. The index set $J_t^0$ is used to
construct estimators of option prices and is randomly chosen such that
its size is 90\% (up to rounding to the next integer) of the size of
$J_t$. Since $t$ is fixed, under the hypotheses of Section
\ref{sec:estim} we can write, without loss of generality,
$p_k = \pi_p(K_k,T_k-t)$ for all $k \in J_t$.  Denoting any of the
estimators of Section \ref{sec:estim} by $\hat{\pi}$, constructed on
the basis of $(p_k,K_k,T_k-t)_{k \in J_t^0}$, we obtain estimates
$\hat{p}_k=\hat{\pi}(K_k,T_k-t)$ for all $k \in J_t^1$. These estimates
are well defined, or meaningful, only for those $k \in J_t^1$ such that
$(K_k,T_k-t)$ belongs to the convex hull of $(K_k,T_k-t)_{k\in J_t^0}$
whenever a linear estimator or a Nadaraya-Watson estimator is used.
The (relative) error for any one of the estimators is then simply
defined as $e_k = \lvert 1- \hat{p}_k/p_k \rvert$ for all
$k \in J_t^1$ such that $\hat{p}_k$ is well defined. This procedure is
repeated for all trading days of the year. We the compute various
averages and related statistics on the obtained sample of relative
errors $(e_k)_k$ for the whole year.

\begin{rmk}
  Discarding all call prices to construct estimators of the pricing
  functional for put prices (and vice versa) implies that a large
  amount of potentially useful information is being discarded. In
  fact, in view of put-call parity, prices of call options can be
  translated into prices of put options with the same maturity and
  strike price, exactly as in the procedure described at length in
  Section \ref{sec:data}. However, given the usual characteristics of
  call and put options on the S\&P500, using the additional
  information does not improve essentially the accuracy of the
  estimators. In fact, as one can see in Figure \ref{fig2}, the set of
  points $(K,x)$ for call and put options, where $K$ and $x$ stand for
  strike price and time to maturity, respectively, are almost
  disjoint. It is clear, on the other hand, that taking into account
  the information contained in the whole set of option prices could be
  very useful to price (deep) ITM options.
\end{rmk}

\begin{figure}[tbp]
\caption{Time to maturity and strike prices: call vs. put}
\label{fig2}\vspace{2mm}
\parbox{\textwidth}{\footnotesize This figure shows time to maturity
  and strike price of all options traded in one day (June 21, 2012).}
\newline
\vspace{2mm}
\par
\begin{center}
\includegraphics[width=0.85\textwidth]{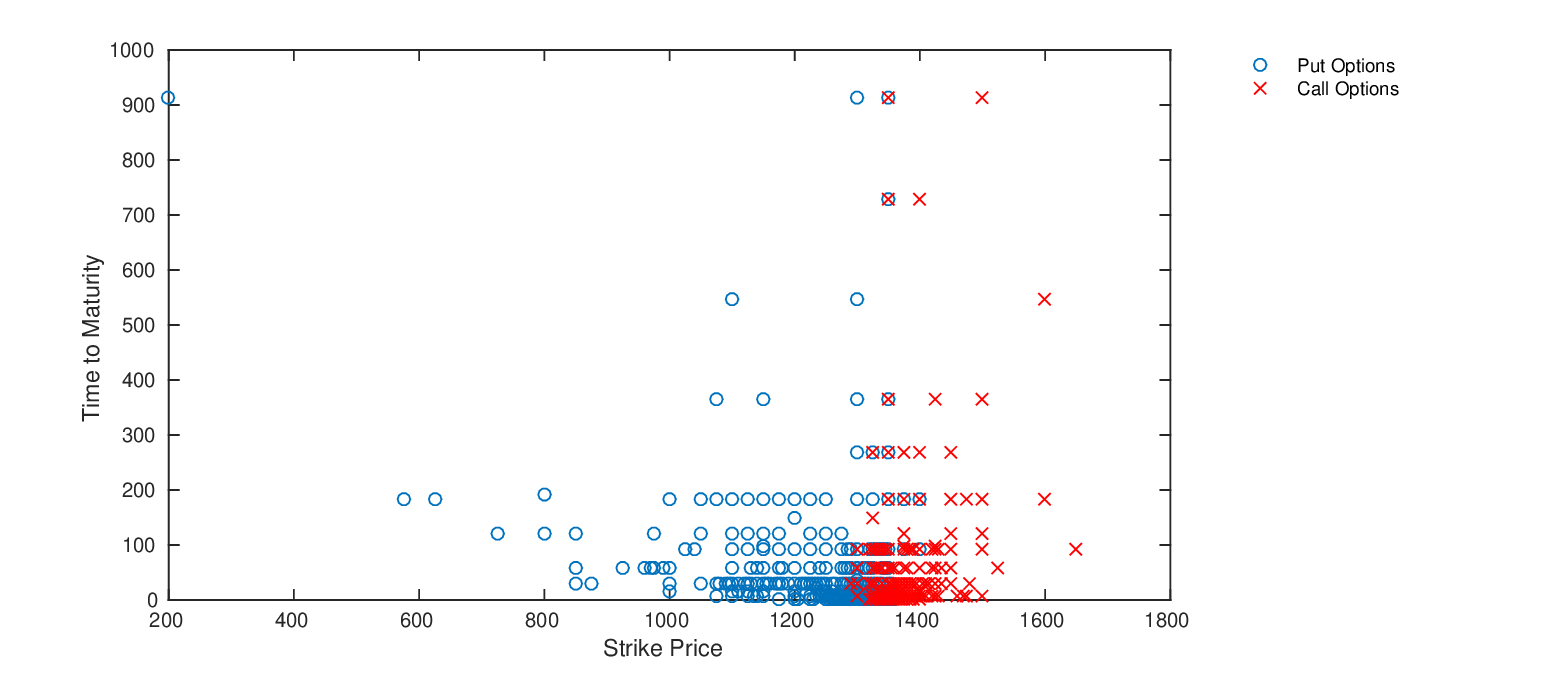}
\end{center}
\end{figure}

Before presenting the empirical results, let us briefly introduce the
labels used below and in the corresponding tables:
\begin{description}\label{etichette}
\item[LI] Normalized linear interpolator of the pricing function
  defined in \S\ref{ssec:li}.
\item[BS] Black-Scholes price with volatility obtained by
  normalized linear interpolation of the volatility surface, as
  defined in \S\ref{ssec:iv}.
\item[NW] Nadaraya-Watson estimator of the pricing
  function with smoothing parameters chosen by a quantile-based
  criterion, as defined in \S\ref{ssec:NW}.
\item[NW\raisebox{-.4ex}{\scriptsize CV}] As NW, but with smoothing
  parameter chosen by cross-validation.
\item[BS-NW] Black-Scholes price with volatility obtained by
  Nadaraya-Watson estimation.
\item[BS-NW\raisebox{-.4ex}{\scriptsize CV}] As BS-NW, but with
  smoothing parameter chosen by cross-validation.
\item[VG] The Variance-Gamma estimator as defined in \S\ref{ssec:VG}.
\end{description}

\subsection{Pricing within the hull}
Statistics on the empirical pricing error computed on the whole data
set, both for the case of put and of call options, are collected in
Table \ref{table2}. Since, as mentioned above, the linear interpolator
of the pricing functional (LI) as well as the the linear interpolator
of the implied volatility function (hence BS) are only defined on the
convex hull of the sets of couples $(K_k,T_k-t)_{k \in J_t^0}$, for
all $t$, the set of empirical errors $(e_k)_k$ is restricted to those
$k$ such that the linear interpolator is well defined. This
restriction implies that $5.1\%$ of the put options and $7.4\%$ of the
call options (whose index $k$ belongs to $J_t^1$ for some $t$) cannot
be priced by methods relying on linear interpolation. The behavior of
the estimators that do not suffer of this restriction (at least
formally) is discussed later.

It turns out that all non-parametric estimators, if evaluated in terms
of their mean $L_1$ relative error, are not satisfactory, with figures
ranging from around 9\% to 40\%. The parametric estimator based on the
Variance-Gamma process (VG) has even higher mean errors.\footnote{It
  may be interesting to note that the mean relative in-sample and
  out-of-sample errors of the VG estimator are very close. For
  example, the mean in-sample errors for put and call options on the
  trimmed dataset are equal to $39.6\%$ and $48.2\%$, respectively,
  while the corresponding mean out-of-sample errors are equal to
  $38.5\%$ and $49.6\%$. It is reasonable to speculate that the VG
  process is simply a poor fit to the data-generating
  process. However, one should also recall that the calibration
  procedure of the VG model is \emph{not} well posed.}

\begin{table}[tbp]
\caption{Pricing errors on the full dataset}
\begin{center}
\vspace{2mm}
\parbox{\textwidth}{\footnotesize The table displays descriptive
  statistics on the empirical pricing errors (in percentage points) of
  the various estimators, on the basis of the complete dataset. Some
  points of the empirical distribution of the pricing error are also
  reported.  The labels used for the different estimators are defined
  on page~\pageref{etichette}. The sample period is January 3, 2012 to
  December 31, 2012. The pricing error is computed on 4460 put options
  and on 2732 call options. \vspace{4mm}}
\footnotesize{{\begin{center} 
 \begin{tabular*}{\textwidth}{@{\extracolsep{\fill}}lccccccc} 
 \toprule 
 \multicolumn{8}{c}{\textsc{Put Options}}\\ 
 \multicolumn{8}{c}{}\\ 
Error & LI & BS& NW & BS-NW & NW\raisebox{-.4ex}{\tiny CV} & BS-NW\raisebox{-.4ex}{\tiny CV} & VG \\  
 \midrule 
Mean    & 11.9& 10.7& 289.1& 39.8& 29.5& 32.2& 46.9\\ 
St. Dev.& 49.0& 130.8& 985.7& 249.4& 113.2& 280.5& 28.0\\ 
Median  & 3.1& 1.1& 26.9& 16.0& 10.8& 6.4& 45.7\\ 
Min     & 0.0& 0.0& 0.0& 0.0& 0.0& 0.0& 0.0\\ 
Max     & 2027.0& 8079.8& 17792.1& 15551.7& 4649.3& 16892.0& 314.7\\ 
 \multicolumn{8}{c}{}\\ 
 \multicolumn{8}{c}{Empirical Distribution}\\ 
 \midrule 
$1\%$   & 23.7& 48.0& 3.1& 6.2& 8.1& 16.3& 0.9\\ 
$5\%$   & 61.6& 73.4& 13.4& 25.8& 30.2& 44.2& 4.6\\ 
$10\%$  & 75.9& 82.9& 24.5& 40.0& 47.4& 60.0& 9.3\\ 
$20\%$  & 87.5& 91.2& 41.6& 55.0& 69.8& 73.6& 18.9\\ 
$25\%$  & 90.0& 93.0& 48.0& 59.6& 75.7& 77.0& 23.9\\ 
$30\%$  & 92.2& 94.6& 52.6& 63.5& 80.2& 79.9& 29.3\\ 
$50\%$  & 96.0& 97.2& 65.3& 75.4& 89.3& 87.0& 56.6\\ 
 \multicolumn{8}{c}{}\\ 
 \multicolumn{8}{c}{}\\ 
 \toprule 
 \multicolumn{8}{c}{\textsc{Call Options}}\\ 
 \multicolumn{8}{c}{}\\ 
Error & LI & BS& NW & BS-NW & NW\raisebox{-.4ex}{\tiny CV} & BS-NW\raisebox{-.4ex}{\tiny CV} & VG \\  
 \midrule 
Mean    & 30.1& 8.7& 278.3& 26.6& 40.0& 14.7& 57.4\\ 
St. Dev.& 418.0& 36.6& 1013.1& 48.1& 180.0& 51.1& 30.3\\ 
Median  & 5.0& 1.3& 29.2& 7.9& 15.6& 4.2& 64.8\\ 
Min     & 0.0& 0.0& 0.0& 0.0& 0.0& 0.0& 0.0\\ 
Max     & 16075.0& 1125.9& 16053.6& 830.2& 4733.0& 1811.3& 276.1\\ 
 \multicolumn{8}{c}{}\\ 
 \multicolumn{8}{c}{Empirical Distribution}\\ 
 \midrule 
$1\%$   & 18.9& 45.8& 2.4& 10.0& 7.5& 19.4& 1.9\\ 
$5\%$   & 50.0& 73.6& 10.0& 38.4& 23.5& 55.3& 5.2\\ 
$10\%$  & 67.9& 83.4& 17.9& 54.7& 35.7& 71.4& 9.0\\ 
$20\%$  & 81.8& 91.1& 36.7& 68.1& 58.7& 84.2& 16.4\\ 
$25\%$  & 85.4& 93.3& 45.2& 72.2& 65.8& 87.0& 19.5\\ 
$30\%$  & 87.8& 94.9& 50.9& 74.7& 72.6& 88.9& 22.5\\ 
$50\%$  & 93.2& 97.4& 64.6& 82.7& 85.8& 94.2& 36.3\\ 
\bottomrule 
\end{tabular*} 
\end{center}
}}
\end{center}
\label{table2}
\end{table}

A more accurate evaluation of the performance of the estimators is
obtained by looking at the distribution of the empirical errors, some
points of which are also displayed in Table \ref{table2}. In
particular, one observes that the best performance, in terms of the
number of options whose estimated price is reasonably near the market
price, is achieved by the estimator based on Black-Scholes formula
with linearly interpolated volatility (BS). It is also evident that
the LI estimator has a considerably worse performance that the BS
estimator. Similarly, the Nadaraya-Watson estimator of the pricing
functional displays much weaker performance than the BS
estimator, both when the smoothing parameter is chosen via a simple
quantile-based criterion and when it is chosen via cross
validation. In spite of the major improvement of the NW-based methods
when cross validation is used with respect to their simpler
quantile-based counterparts, \emph{all} NW-based methods perform
considerably worse than both methods based on simple linear
interpolation, that is LI and BS. Taking into account that the choice
of the smoothing parameter by cross validation is computationally much
more intensive (and slower) than linear interpolation, these empirical
results suggest that, for the purpose of the pricing problem treated
here, keeping things simple is not only faster, but also significantly
more accurate. At this point we should remark that the empirical
results reported in this section are obviously influenced by the
(random) splitting of the index set $J_t$ in two disjoint sets $J_t^0$
and $J_t^1$ for each date $t$. However, while changing the
initialization of the random number generator (obviously) results in
different figures, the qualitative observations just made, as well as
the conclusions they imply, do not change.

\medskip

The whole analysis has also been carried out on a smaller data set,
obtained by eliminating those options whose implied volatility is
higher than $70$\% or price is lower than
\$$1/8$. This procedure is adopted in \cite{AL} and is repeated here
only for comparison purposes with the results summarized in Table
\ref{table2} (recall, however, that, as discussed above, the analysis
of \cite{AL} and ours are not directly comparable). The corresponding
results are collected in Table \ref{table3}. 

\begin{table}[tbp]
\caption{Pricing errors on the trimmed dataset}
\begin{center}
\vspace{2mm}
\parbox{\textwidth}{\footnotesize The table displays descriptive
  statistics on the empirical pricing errors (in percentage points) of
  the various estimators, on the basis of the trimmed dataset. Some
  points of the empirical distribution of the pricing error are also
  reported.  The labels used for the different estimators are defined
  on page~\pageref{etichette}. The sample period is January 3, 2012 to
  December 31, 2012. The pricing error is computed on 4021 put options
  and on 2496 call options. \vspace{4mm}}
\footnotesize{{\begin{center} 
 \begin{tabular*}{\textwidth}{@{\extracolsep{\fill}}lccccccc} 
 \toprule 
 \multicolumn{8}{c}{\textsc{Put Options}}\\ 
 \multicolumn{8}{c}{}\\ 
Error & LI & BS& NW & BS-NW & NW\raisebox{-.4ex}{\tiny CV} & BS-NW\raisebox{-.4ex}{\tiny CV} & VG \\  
 \midrule 
Mean       & 9.5& 4.8& 151.4& 20.8& 22.8& 17.9& 38.5\\ 
St. Dev.   & 35.5& 13.3& 425.5& 29.0& 62.6& 72.8& 24.6\\ 
Median     & 3.2& 1.0& 24.8& 10.2& 10.8& 5.5& 35.9\\ 
Min        & 0.0& 0.0& 0.0& 0.0& 0.0& 0.0& 0.0\\ 
Max        & 1129.3& 322.9& 6152.4& 527.5& 1462.2& 3358.0& 202.0\\ 
 \multicolumn{8}{c}{}\\ 
 \multicolumn{8}{c}{Empirical Distribution}\\ 
 \midrule 
$1\%$      & 21.7& 50.1& 3.1& 10.0& 8.6& 17.7& 1.2\\ 
$5\%$      & 62.0& 78.6& 13.2& 33.6& 31.1& 48.1& 5.8\\ 
$10\%$     & 79.6& 88.0& 25.0& 49.6& 48.0& 64.3& 12.0\\ 
$20\%$     & 91.3& 95.1& 43.0& 67.0& 71.1& 78.6& 25.2\\ 
$25\%$     & 93.6& 96.4& 50.2& 73.0& 78.3& 82.3& 31.9\\ 
$30\%$     & 94.6& 97.1& 55.7& 77.4& 83.0& 85.2& 40.0\\ 
$50\%$     & 97.6& 98.7& 68.9& 87.6& 91.6& 91.9& 72.5\\ 
 \multicolumn{8}{c}{}\\ 
 \multicolumn{8}{c}{}\\ 
 \toprule 
 \multicolumn{8}{c}{\textsc{Call Options}}\\ 
 \multicolumn{8}{c}{}\\ 
Error & LI & BS& NW & BS-NW & NW\raisebox{-.4ex}{\tiny CV} & BS-NW\raisebox{-.4ex}{\tiny CV} & VG \\  
 \midrule 
Mean       & 14.4& 5.2& 138.4& 14.3& 31.3& 9.3& 49.6\\ 
St. Dev.   & 45.2& 18.7& 375.1& 20.0& 77.0& 17.6& 27.6\\ 
Median     & 4.6& 1.1& 27.3& 6.8& 16.1& 3.8& 54.2\\ 
Min        & 0.0& 0.0& 0.0& 0.0& 0.0& 0.0& 0.0\\ 
Max        & 1233.6& 649.1& 4954.1& 217.9& 1952.5& 315.4& 227.0\\ 
 \multicolumn{8}{c}{}\\ 
 \multicolumn{8}{c}{Empirical Distribution}\\ 
 \midrule 
$1\%$      & 17.7& 47.2& 2.5& 9.7& 7.6& 19.2& 2.3\\ 
$5\%$      & 51.8& 79.2& 11.1& 41.5& 23.2& 57.1& 6.7\\ 
$10\%$     & 69.4& 88.1& 19.7& 60.8& 34.9& 75.1& 11.7\\ 
$20\%$     & 83.5& 94.6& 38.5& 79.1& 57.9& 88.5& 19.3\\ 
$25\%$     & 87.3& 96.0& 46.4& 82.6& 65.6& 91.2& 23.2\\ 
$30\%$     & 89.5& 96.4& 53.6& 86.1& 72.5& 93.2& 27.2\\ 
$50\%$     & 94.2& 98.4& 69.0& 93.7& 86.2& 97.0& 45.0\\ 
\bottomrule 
\end{tabular*} 
\end{center}
}}
\end{center}
\label{table3}
\end{table}

Even though, as is natural to expect, the mean pricing error improves
for all estimators, the qualitative picture emerged from the analysis
of the whole data set does not change. In particular, the only
NW-based method with a barely acceptable performance is the
Black-Scholes estimator coupled with estimation of the implied
volatility when the smoothing parameter is chosen by cross
validation. The important qualitative observation, though, is, as
before, that keeping things simple, especially using the BS estimator,
is both faster and more accurate in the vast majority of
cases. Another important qualitative conclusion is that pricing by
non-trivial fully parametric models, such as the Variance-Gamma model
of \S\ref{ssec:VG} , calibrated to observed option prices, does not
produce reliable estimates, even though the model allows for skewed
and (moderately) heavy-tailed distributions of returns.

%\begin{table}[tbp]
%\caption{Difference of pricing errors on the trimmed dataset}
%\begin{center}
%\vspace{2mm}
%\parbox{\textwidth}{\footnotesize The table displays statistical 
%  tests on the difference of pricing errors of
%  the various estimators, on the basis of the trimmed dataset. 
%The first panel shows p-values of a two-sample Kolmogorov-Smirnov test, while the second panel shows p-values of a sign test on zero median for the difference of estimators. 
%The sample period is January 3, 2012 to
%  December 31, 2012. The pricing error is computed on 4021 put options
%  and on 2496 call options. \vspace{4mm}}
%  \scriptsize {{\input{t3_test.tex}}}
%\end{center}
%\label{table3_test}
%\end{table}

\medskip

The empirical distribution of the (absolute) pricing errors clearly
indicates that the mean errors are heavily influenced by relatively
few large values. This observation holds both for the whole as well as
for the reduced dataset. For instance, the mean pricing error of the
LI method for put options is around $12$\% and $9.5$\%, respectively,
while more than $75$\% of the errors are less than $10$\% in both
cases. In particular, by direct inspection of the results obtained for
a few randomly chosen dates, one observes that high relative errors
mostly afflict options with low prices. It appears therefore
interesting to analyze the pricing performance of the various
estimators on ``expensive'' options, i.e. whose (observed, not
estimated) price is higher than $1\$$, respectively. 

\begin{table}[tbp]
\caption{Pricing errors on options with price larger than $\$1$ (full dataset)}
\begin{center}
\vspace{2mm}
\parbox{\textwidth}{\footnotesize The table displays descriptive
  statistics on the empirical pricing errors (in percentage points) of
  the various estimators restricted to options in the full dataset
  whose price is larger than $\$1$. Some points of the empirical
  distribution of the pricing error are also reported.  The labels
  used for the different estimators are defined on
  page~\pageref{etichette}. The sample period is January 3, 2012 to
  December 31, 2012. The pricing error is computed on 3222 put options
  and on 2137 call options. \vspace{4mm}}
\footnotesize{{\begin{center} 
 \begin{tabular*}{\textwidth}{@{\extracolsep{\fill}}lccccccc} 
 \toprule 
 \multicolumn{8}{c}{\textsc{Put Options}}\\ 
 \multicolumn{8}{c}{}\\ 
Error & LI & BS& NW & BS-NW & NW\raisebox{-.4ex}{\tiny CV} & BS-NW\raisebox{-.4ex}{\tiny CV} & VG \\  
 \midrule 
Mean       & 7.4& 2.4& 37.9& 20.4& 17.8& 12.3& 47.7\\ 
St. Dev.   & 31.0& 7.4& 71.2& 35.8& 57.4& 33.9& 24.1\\ 
Median     & 2.4& 0.6& 17.6& 9.0& 9.1& 4.0& 47.2\\ 
Min        & 0.0& 0.0& 0.0& 0.0& 0.0& 0.0& 0.0\\ 
Max        & 777.4& 158.0& 894.7& 1051.8& 2170.1& 722.8& 100.0\\ 
 \multicolumn{8}{c}{}\\ 
 \multicolumn{8}{c}{Empirical Distribution}\\ 
 \midrule 
$1\%$      & 25.8& 60.7& 3.8& 8.3& 7.7& 21.1& 0.7\\ 
$5\%$      & 73.2& 88.6& 17.6& 34.5& 34.0& 56.0& 3.6\\ 
$10\%$     & 86.8& 95.6& 32.5& 52.8& 53.4& 73.6& 7.2\\ 
$20\%$     & 94.2& 98.6& 54.7& 70.6& 77.6& 86.0& 15.0\\ 
$25\%$     & 95.5& 98.9& 62.9& 75.5& 83.2& 88.9& 19.4\\ 
$30\%$     & 96.6& 99.1& 68.8& 79.6& 87.4& 91.1& 24.4\\ 
$50\%$     & 98.0& 99.6& 84.2& 90.0& 94.4& 95.3& 54.2\\ 
 \multicolumn{8}{c}{}\\ 
 \multicolumn{8}{c}{}\\ 
 \toprule 
 \multicolumn{8}{c}{\textsc{Call Options}}\\ 
 \multicolumn{8}{c}{}\\ 
Error & LI & BS& NW & BS-NW & NW\raisebox{-.4ex}{\tiny CV} & BS-NW\raisebox{-.4ex}{\tiny CV} & VG \\  
 \midrule 
Mean       & 12.7& 3.4& 43.4& 13.6& 25.1& 6.8& 61.6\\ 
St. Dev.   & 51.5& 13.3& 100.5& 23.3& 63.8& 14.0& 26.4\\ 
Median     & 3.6& 0.8& 21.6& 5.4& 14.2& 3.0& 71.1\\ 
Min        & 0.0& 0.0& 0.0& 0.0& 0.0& 0.0& 0.0\\ 
Max        & 1287.4& 270.5& 3414.3& 355.6& 1437.5& 273.0& 98.9\\ 
 \multicolumn{8}{c}{}\\ 
 \multicolumn{8}{c}{Empirical Distribution}\\ 
 \midrule 
$1\%$      & 20.7& 55.8& 3.1& 12.4& 7.5& 23.6& 1.9\\ 
$5\%$      & 58.4& 85.8& 12.5& 47.5& 25.4& 65.9& 4.7\\ 
$10\%$     & 77.3& 93.8& 22.4& 67.4& 38.9& 83.1& 7.4\\ 
$20\%$     & 88.3& 97.3& 46.4& 81.5& 63.9& 93.4& 12.4\\ 
$25\%$     & 90.7& 98.2& 57.0& 85.7& 71.3& 95.2& 14.5\\ 
$30\%$     & 92.0& 98.7& 63.9& 87.8& 78.1& 96.4& 16.6\\ 
$50\%$     & 95.7& 99.3& 80.4& 93.6& 90.6& 98.6& 27.5\\ 
\bottomrule 
\end{tabular*} 
\end{center}
}}
\end{center}
\label{table6}
\end{table}

The corresponding results are reported in Tables \ref{table6}: the BS
estimator still exhibits a quite strong performance with mean absolute
error around $2$\,-\,$3$\% and more than $85$\% of options priced
within a $5$\% error margin. On the other hand, other estimators show
a somewhat inconsistent performance: the LI estimator for put options
has a mean price error of approximately $7$\%, which increases to over
$12$\% for call options, while a completely symmetrical behavior is
displayed by the BS-NW estimator. The main qualitative conclusions
drawn above are confirmed also in this situation: among the NW-based
methods, only the BS-NW\raisebox{-.4ex}{\scriptsize CV} is in some
cases acceptable, and the simple linear interpolation-based BS
estimator still outperforms (by far) all others. 

Moreover, a comparison of the cumulative distribution
  functions of (the absolute value of) the relative pricing error
  reveals that the BS estimator stochastically dominates all other
  estimators (see Figure~\ref{dom}). This conclusion is also supported
  by the results of a two-sample Kolmogorov-Smirnov test, which
  rejects the hypothesis at confidence level 5\% that the errors of
  \emph{any} two estimators may come from the same distribution.

\begin{figure}[tbp]
\caption{Cumulative distribution functions of pricing errors}
\label{dom}\vspace{2mm}
\parbox{\textwidth}{\footnotesize This figure shows the cumulative distribution functions of pricing errors of the
various estimators, on the basis of the complete dataset. The labels used for the different estimators are defined on page 26.
The sample period is January 3, 2012 to December 31, 2012. The pricing error is computed on 2732 call options (Panel A) and on 4460 put
options (Panel B).}
\newline
\vspace{2mm}
\begin{tabular*}{\textwidth}{@{\extracolsep{\fill}}cc}
Panel A & Panel B  \\
\\
\includegraphics[width=0.5\textwidth]{Dom_call.eps} &
\includegraphics[width=0.5\textwidth]{Dom_put.eps} \\
\label{shock_gdp}
\end{tabular*}
\end{figure}

\subsection{Pricing outside the hull}\label{ssec:fuori}
As seen above, the NW-based as well as the VG estimators can (in
principle) estimate the prices at time $t$ of options whose parameters
$K$ and $T-t$ do not fall within the convex hull of
$(K_k,T_k-t)_{k \in J^0_t}$. Let us recall, however, that NW
non-parametric regression simply produces estimates that are weighted
averages, hence, as already remarked in \S\ref{ssec:NW}, estimates for
parameters falling outside the convex hull of $J^0_t$ should be taken
with extreme care. Results on the empirical pricing error relative to
options whose parameters fall outside the convex hull of $J_t^0$,
$t=1,\ldots,250$, are reported in Tables \ref{table4} and
\ref{table5}: with mean errors over $30$\% and $50$\% or more options
mispriced by at least $20$\%, the BS-NW\raisebox{-.4ex}{\scriptsize CV}
estimator, which also in this case performs better than all other
NW-based as well as VG estimators, could at best be used to get rough
estimates of the correct price.

\begin{table}[tbp]
\caption{Pricing errors outside the convex hull (full dataset)}
\begin{center}
\vspace{2mm}
\parbox{\textwidth}{\footnotesize The table displays descriptive
  statistics on the empirical pricing errors (in percentage points) of
  the various estimators restricted to those options for which the LI
  estimator is not defined, on the basis of the full dataset. Some
  points of the empirical distribution of the pricing error are also
  reported. The labels used for the different estimators are defined
  on page~\pageref{etichette}. The sample period is January 3, 2012 to
  December 31, 2012. The pricing error is computed on 241 put options
  and on 217 call options. \vspace{4mm}}
\footnotesize{{\begin{center} 
 \begin{tabular*}{\textwidth}{@{\extracolsep{\fill}}lccccccc} 
 \toprule 
 \multicolumn{6}{c}{\textsc{Put Options}}\\ 
 \multicolumn{6}{c}{}\\ 
Error &NW & BS-NW & NW\raisebox{-.4ex}{\tiny CV} & BS-NW\raisebox{-.4ex}{\tiny CV} & VG \\  
 \midrule 
Mean       & 996.3& 50.7& 1945.7& 49.8& 62.9\\ 
St. Dev.   & 3381.5& 45.5& 10228.9& 49.3& 45.1\\ 
Median     & 69.0& 45.5& 66.5& 38.7& 50.8\\ 
Min        & 0.0& 0.0& 0.0& 0.0& 0.0\\ 
Max        & 44100.0& 253.4& 128466.9& 380.6& 271.1\\ 
 \multicolumn{6}{c}{}\\ 
 \multicolumn{6}{c}{Empirical Distribution}\\ 
 \midrule 
$1\%$      & 3.7& 7.5& 8.7& 8.7& 0.8\\ 
$5\%$      & 6.2& 19.9& 10.8& 24.9& 4.1\\ 
$10\%$     & 7.1& 34.4& 13.3& 37.8& 13.7\\ 
$20\%$     & 14.1& 41.1& 22.4& 44.4& 22.0\\ 
$25\%$     & 18.7& 44.4& 27.4& 45.6& 24.1\\ 
$30\%$     & 22.4& 46.1& 30.7& 48.1& 29.5\\ 
$50\%$     & 33.2& 51.0& 41.9& 52.3& 49.8\\ 
 \multicolumn{6}{c}{}\\ 
 \multicolumn{6}{c}{}\\ 
 \toprule 
 \multicolumn{6}{c}{\textsc{Call Options}}\\ 
 \multicolumn{6}{c}{}\\ 
Error & NW & BS-NW & NW\raisebox{-.4ex}{\tiny CV} & BS-NW\raisebox{-.4ex}{\tiny CV} & VG \\  
 \midrule 
Mean       & 1527.5& 39.9& 2110.3& 36.4& 53.6\\ 
St. Dev.   & 8394.3& 47.6& 13801.4& 45.7& 57.2\\ 
Median     & 62.2& 11.2& 63.6& 15.5& 47.7\\ 
Min        & 0.3& 0.0& 0.0& 0.0& 0.0\\ 
Max        & 107881.1& 278.1& 153605.4& 278.1& 342.5\\ 
 \multicolumn{6}{c}{}\\ 
 \multicolumn{6}{c}{Empirical Distribution}\\ 
 \midrule 
$1\%$      & 2.3& 19.4& 6.5& 20.3& 18.4\\ 
$5\%$      & 7.4& 39.6& 10.1& 40.1& 23.0\\ 
$10\%$     & 15.2& 48.8& 16.1& 46.1& 29.5\\ 
$20\%$     & 27.6& 55.3& 25.3& 53.9& 37.3\\ 
$25\%$     & 33.6& 57.1& 29.0& 55.3& 40.6\\ 
$30\%$     & 35.9& 59.0& 34.1& 59.0& 43.8\\ 
$50\%$     & 43.8& 63.1& 42.4& 67.7& 51.6\\ 
\bottomrule 
\end{tabular*} 
\end{center}
}}
\end{center}
\label{table4}
\end{table}

\begin{table}[tbp]
\caption{Pricing errors outside the convex hull (trimmed dataset)}
\begin{center}
\vspace{2mm}
\parbox{\textwidth}{\footnotesize The table displays descriptive
  statistics on the empirical pricing errors (in percentage points) of
  the various estimators restricted to those options for which the LI
  estimator is not defined, on the basis of the trimmed dataset. Some
  points of the empirical distribution of the pricing error are also
  reported.  The labels used for the different estimators are defined
  on page~\pageref{etichette}. The sample period is January 3, 2012 to
  December 31, 2012. The pricing error is computed on 251 put options
  and on 225 call options. \vspace{4mm}}
\footnotesize{{\begin{center} 
 \begin{tabular*}{\textwidth}{@{\extracolsep{\fill}}lccccccc} 
 \toprule 
 \multicolumn{6}{c}{\textsc{Put Options}}\\ 
 \multicolumn{6}{c}{}\\  
Error & NW & BS-NW & NW\raisebox{-.4ex}{\tiny CV} & BS-NW\raisebox{-.4ex}{\tiny CV} & VG \\  
 \midrule 
Mean       & 682.9& 51.5& 414.5& 46.1& 58.0\\ 
St. Dev.   & 1404.9& 39.6& 1617.0& 38.9& 48.3\\ 
Median     & 120.8& 59.3& 40.5& 44.8& 43.6\\ 
Min        & 0.0& 0.0& 0.0& 0.0& 0.1\\ 
Max        & 9526.7& 232.1& 15964.3& 232.1& 316.5\\ 
 \multicolumn{6}{c}{}\\ 
 \multicolumn{6}{c}{Empirical Distribution}\\ 
 \midrule 
$1\%$      & 0.8& 5.2& 6.4& 8.0& 1.2\\ 
$5\%$      & 1.6& 17.9& 9.6& 24.3& 7.6\\ 
$10\%$     & 5.2& 27.1& 13.9& 30.3& 15.1\\ 
$20\%$     & 11.2& 35.5& 23.1& 38.6& 23.9\\ 
$25\%$     & 13.5& 37.5& 29.9& 40.6& 26.7\\ 
$30\%$     & 15.5& 39.8& 34.7& 42.2& 32.7\\ 
$50\%$     & 33.1& 45.4& 57.4& 52.6& 55.8\\ 
 \multicolumn{6}{c}{}\\ 
 \multicolumn{6}{c}{}\\ 
 \toprule 
 \multicolumn{6}{c}{\textsc{Call Options}}\\ 
 \multicolumn{6}{c}{}\\ 
Error & NW & BS-NW & NW\raisebox{-.4ex}{\tiny CV} & BS-NW\raisebox{-.4ex}{\tiny CV} & VG \\  
 \midrule 
Mean       & 650.2& 34.3& 321.7& 32.8& 46.3\\ 
St. Dev.   & 1357.2& 42.8& 897.7& 45.0& 45.1\\ 
Median     & 77.9& 15.3& 66.7& 16.2& 38.0\\ 
Min        & 0.2& 0.0& 0.0& 0.0& 0.0\\ 
Max        & 7997.1& 277.8& 6147.3& 280.4& 213.2\\ 
 \multicolumn{6}{c}{}\\ 
 \multicolumn{6}{c}{Empirical Distribution}\\ 
 \midrule 
$1\%$      & 1.8& 19.6& 4.4& 20.0& 20.0\\ 
$5\%$      & 4.4& 34.2& 7.6& 36.4& 25.3\\ 
$10\%$     & 9.3& 44.4& 10.7& 43.1& 31.1\\ 
$20\%$     & 21.8& 54.2& 21.8& 51.1& 37.8\\ 
$25\%$     & 25.8& 56.9& 29.8& 56.9& 41.3\\ 
$30\%$     & 30.7& 59.6& 34.7& 61.3& 44.4\\ 
$50\%$     & 43.6& 69.3& 44.4& 75.6& 59.1\\ 
\bottomrule 
\end{tabular*} 
\end{center}
}}
\end{center}
\label{table5}
\end{table}

A way to extend the domain of definition of estimators based on linear
interpolation is to augment $J_t^0$, for each $t$, with a set of
fictitious parameters and corresponding option prices. In particular,
one can add synthetic options with time to maturity equal to $0$, so
their prices are equal to their payoff. For each $t$, augmenting
$J_t^0$ with $30$ couples of the type $(K_k,0)_{k=1,\ldots,30}$, where
$(K_k)$ are increasing, equally spaced, ranging between the smallest
and the largest strike price of \emph{all} options in $J^t$, we obtain
an ``augmented'' LI estimator. This enlarges the domain of definition
of LI, so that the proportion of options that cannot be prices drops
to $2$\% for put options and to $3.3$\% for call options. The
extension procedure just described can be applied to any data set, as
the fictitious prices are universal. However, since observed prices
tend to be higher than theoretical prices for small times, it does not
seem reasonable to eliminate from the data set options with low price,
which in the large majority of cases also have very short time to
maturity, and then to artificially add options with time to maturity
equal to zero. In other words, the extension procedure should be
applied only, if necessary, to the full data set, and this is what we
do. The empirical results are reported in Table \ref{table8}.
\begin{table}[tbp]
\caption{Linear estimator with fictitious options}
\begin{center}
\vspace{2mm}
\parbox{\textwidth}{\footnotesize The table displays descriptive
  statistics on the empirical pricing errors (in percentage points) of
  the ``augmented'' normalized linear estimator, defined in
  \S\ref{ssec:fuori} and labeled LIB, on the basis of the full
  dataset. For comparison, the first column reports the statistics for
  the normalized linear interpolator (LI). The sample period is
  01/03/2012 to 12/31/2012. \vspace{4mm}}
\footnotesize{{\begin{center} 
 \begin{tabular*}{\textwidth}{@{\extracolsep{\fill}}lcccc} 
 \toprule 
 \multicolumn{5}{c}{}\\ 
&\multicolumn{2}{c}{\textsc{Put Options}} & \multicolumn{2}{c}{\textsc{Call Options}}\\ 
 \cline{2-5} 
 \multicolumn{5}{c}{}\\ 
Error & LI & LIB& LI & LIB\\ 
 \midrule 
Mean          & 11.9& 14.0& 30.1& 30.4\\ 
St. Dev.      & 49.0& 47.5& 418.0& 211.6\\ 
Median        & 3.1& 3.5& 5.0& 5.3\\ 
Min           & 0.0& 0.0& 0.0& 0.0\\ 
Max           & 2027.0& 1167.4& 16075.0& 8860.8\\ 
Priced options& 4460& 4607& 2732& 2851\\ 
\bottomrule 
\end{tabular*} 
\end{center}
}}
\end{center}
\label{table8}
\end{table}

\appendix
\section{Appendix}
\subsection{Sensitivity of implied volatility against dividend}
We place ourselves in the well-known Black-Scholes setting. Let us
denote by $C$ the price of a European call option on a dividend-paying
stock, so that
\[
C = Se^{-qt} \Phi(d_+) - Ke^{-rt} \Phi(d_-),
\]
where
\[
d_+ := \frac{\log S/K + \bigl(r-q+\sigma^2/2\bigr)t}{\sigma\sqrt{t}},
\qquad d_- := d_+ - \sigma\sqrt{t}.
\]
Recalling that
\begin{equation}
  \label{eq:BSid}
  Se^{-qt} \phi(d_+) = Ke^{-rt} \phi(d_-),
\end{equation}
it is immediate to obtain
\[
\frac{\partial C}{\partial q} = -tSe^{-qt}\Phi(d_+).
\]
Moreover, one has
\[
\frac{\partial C}{\partial \sigma} = \sqrt{t} S e^{-qt} \phi(d_+).
\]
In particular, denoting the function $(q,\sigma) \mapsto C$ again just
by $C$ (i.e. treating all other parameters as constants), the implicit
function theorem implies that, for any $q_0$, $\sigma_0$, there exists
a function $\varsigma:\erre_+ \to \erre_+$ (to be interpreted as
implied volatility) such that $\sigma_0=\varsigma(q_0)$,
$C=C(q,\varsigma(q))$, and
\[
\frac{\partial \varsigma}{\partial q}(q_0) = 
- \frac{\partial C}{\partial q}(q_0,\sigma_0) \,
  \frac{\partial C}{\partial \sigma}(q_0,\sigma_0)
= \sqrt{t} \frac{\Phi(d_+)}{\phi(d_+)},
\]
where $d_\pm = d_\pm(q_0,\sigma_0)$.
For the reader's convenience, we are going to prove \eqref{eq:BSid},
which is obviously equivalent to
\[
\frac{S}{K} e^{(r-q)t} = \frac{\phi(d_-)}{\phi(d_+)} 
= \exp\Bigl( \bigl(d_-^2 - d_+^2\bigr)/2 \Bigr),
\]
hence also to
\[
\log(S/K) + (r-q)t = \frac{d_-^2 - d_+^2}{2} 
= \frac12 (d_--d_+) (d_-+d_+).
\]
This identity can be verified by a direct computation, thus showing
the validity of \eqref{eq:BSid}.

\bibliographystyle{amsplain}
\bibliography{references}

\end{document}